\patchcmd\Gread@eps{\@inputcheck#1 }{\@inputcheck"#1"\relax}{}{}
\newcommand{\bm}[1]{\mbox{\boldmath{$#1$}}}
\newtheorem{theorem}{Theorem}
\newtheorem{lemma}[theorem]{Lemma}
\begin{document}

\title{Improved Weighted Average Consensus in Distributed Cooperative Spectrum Sensing Networks}

\author{Aislan Gabriel Hernandes, Mario Proença Lemes Junior and Taufik Abrão
\thanks{Dept. of  Electrical Engineering (DEEL), State University of Londrina  (UEL),  Po.Box 10.011, Londrina, 86057-970, PR , Brazil. E-mails: taufik@uel.br \quad  aislangabrielhernandes@gmail.com}
\thanks{Dept. of  Computer Science (DC), State University of Londrina  (UEL),  Po.Box 10.011, Londrina, 86057-970, PR , Brazil, proenca@uel.br}
}
\maketitle

\begin{abstract}
This work proposes a fully distributed improved weighted average consensus (IWAC and WAC-AE) technique applied to cooperative spectrum sensing problem in cognitive radio systems. This method allows the secondary users cooperate based on only local information exchange without a fusion centre (FC). We have compared four rules of average consensus (AC) algorithms. The first rule is the simple AC without weights. The AC rule presents {performance comparable to the traditional cooperative spectrum sensing} (CSS) techniques, such as the equal gain combining (EGC) rule, which is a soft combining centralised method. Another technique is the weighted average consensus (WAC) rule using the weights based on the  SUs channel condition. This technique results in a performance similar to the maximum ratio combining (MRC) with soft combining (centralised CSS). Two new AC rules are analysed, namely weighted average consensus accuracy exchange (WAC-AE), and improved weighted average consensus (IWAC); the former relates the weights to the channel conditions of the SUs neighbours, while the latter combines the conditions of WAC and WAC-AE in the same rule. All methods are compared each other and with the hard combining centralised CSS. The WAC-AE results in a similar performance of WAC technique but with fast convergence, while  the IWAC can deliver suitable performance with small complexity increment{. Moreover,  IWAC method results in a similar convergence rate than the WAC-AE method but slightly higher than the AC and WAC methods}. Hence, the computational complexity of IWAC, WAC-AE, and WAC are proven to be very similar. The analyses are based on the numerical Monte-Carlo simulations (MCS), while algorithm's convergence is evaluated for both fixed and dynamic-mobile communication scenarios, and under AWGN and Rayleigh channels.
\end{abstract}

\begin{IEEEkeywords}
Cognitive Radio Network, Distributed Cooperative Spectrum Sensing, Soft Combining, Hard Combining, Improved Weighted Average Consensus.
\end{IEEEkeywords}

\section{Introduction} \label{Sec_Introduction}

Due to the growth of the wireless communication services, the available spectrum has become scarce. Measurements carried out by Federal Communications Commission (FCC) have demonstrated that the most of the allocated spectrum is not utilised \cite{fcc}. This motivates the use of the cognitive radio (CR) that has humanlike characteristics, such as, learning, adaptation and cooperation \cite{Mitola99}, \cite{Haykin05} which is able to increase the spectrum efficiency (SE) considerably. In a wireless regional area networks (WRANs), the main objective is to maximise the spectrum utilisation of the TV channels. The CR is the main technology in the WRAN IEEE Standard 802.22 \cite{Stevenson09}, which is applied in the white space TV channels.

One of the tasks realised by the CR is the spectrum sensing, that can be performed by means of single- or multi-band channel techniques; the latter being accomplished in multiple channels wideband scenarios. This task can be carried out in two ways, either in a non-cooperative manner, where secondary users sense independently the spectrum, or in a cooperative way, where the latter can be realised in a distributed or centralised way. In channel scenarios with shadowing and deep fading, the non-cooperative techniques result in poor performance. In such channel conditions, cooperative spectrum sensing (CSS) techniques are used, which allow the exchange of information between the elements of the network; hence, the channel severity  can be partially surpassed due to the diversity gain obtained with the CSS techniques, but with an increase in the complexity cost. In this sense, secondary users can be deployed as cooperative elements aiming {at establishing} decision-based on {\it hard combining} rules (AND, OR and Majority) or {\it soft combining}, including EGC and MRC {\it rules}. 

In the cooperative centralised mode, a fusion centre (FC) is deployed as the final decision maker for all secondary users. Moreover, relay nodes are widely applied in cooperative schemes employing the amplify-and-forward (AF) and decode-and-forward (DF) transmission protocols in a single-hop or multi-hop communication scheme. Usually, the multi-hop communication increases the energy efficiency compared to the single-hop schemes.

The performance of the centralised cooperative spectrum sense schemes operating under fading and AWGN channels is discussed in \cite{Ibnkahla14}. As well known, the hard combining presents degraded performance regarding soft combining rules. Among the hard combining rules, the more reliable performance is attained in {most cases} by the OR rule followed by Majority and AND rule; while among soft combining, the EGC always results in worst performance than MRC rule.

The term distributed (or decentralised) is defined as the way in which the decision is formed, implying in a local decision made by individual nodes. Thus, the term {\it distributed cooperative spectrum sensing} (DCSS) is defined as the final decision made from information exchanged between each node that previously made a local decision. There are some techniques in distributed/decentralised cooperative sensing, such as, belief propagation (BP) \cite{Wu14}, alternating direction method of multipliers (ADMM)\cite{Ding14}, and consensus algorithms (CA)\cite{Mitola11, Mitola15, Li10}. 

Recently, the consensus techniques have become promising in distributed cooperative sensing that allows the sensing without a proper FC receiver in a local one-hop neighbour communication.  The communication is based on bidirectional links (full duplex mode) and implies in a larger energy and spectrum efficiency and a smaller latency in the network. However, the  major part of existing techniques in the literature result in performance similar to the EGC centralised cooperative sensing, that is called simply average consensus (AC). In \cite{Mitola15}, it was proposed a novel consensus technique able to ensure a soft centralised cooperative sensing under the MRC rule. In \cite{Ashrafi11}, a binary consensus technique is developed to guarantee a superior performance to the quantised average consensus. Moreover, an average consensus (AC) technique applied to fixed and dynamic communication channels is discussed in \cite{Li09}. A {distributed average consensus} (DAC) is developed in \cite{Teguig15}, based on the goodness of fit test (GoF). This technique requires only the knowledge of the noise and using the Anderson Darling test \cite{AD52}. Furthermore, in \cite{Vosoughi16}, a trust-aware consensus is applied in the DCSS using Gossip algorithm. In \cite{Soatti16} a technique named  {\it weighted average consensus accuracy exchange}  (WAC-AE)  is proposed to solve the  localisation problem in networks equipped with several fixed nodes ensuring similar performance to the WAC and optimal ML, but with fast convergence. Moreover, in \cite{Nurellari16} a new consensus technique is applied in a quantised way, while in \cite{Kailkhura17} a new consensus technique is proposed to deal with security in a cognitive network in a system with byzantine attacks.

Against this background in the spectrum sensing methods, this paper proposes a two new AC techniques for cooperative descentralised spectrum sensing purpose, namely the {\it weighted average consensus accuracy exchange}  (WAC-AE)  and the {\it improved weighted average consensus} (IWAC). The IWAC method achieves the same performance of WAC method, which is similar to the optimal MRC combining, but with a competitive performance-complexity tradeoff. The WAC-AE is deployed in DCSS for the first time. The proposed IWAC method adopts similar conditions as that deployed in the WAC-AE and WAC rules. The advantage of IWAC lies on the lower number of iterations to achieve a target performance, which implies in a lower overall power consumption in the whole network. In summary, the contributions of this paper are threefold:

\begin{itemize} 
	\item The proposition of new rules on average consensus for distributed spectrum sensing purpose in the CRN context, namely IWAC and WAC-AE, which can achieve similar performance to the optimal centralised CSS with a small or similar number of iterations, depending on channel and system scenario;
	\item An analysis of convergence for the proposed consensus rules operating under fixed and dynamic network scenarios;
	\item A comparative complexity analysis of the proposed IWAC and WAC-AE regarding other AC rules;
\end{itemize}

The rest of the paper is organised as follows. The CR system model is presented in section \ref{sec:model}. The formulation of the centralised cooperative spectrum sensing and the fixed and dynamic channel communication model based on the graph theory are revisited in section \ref{CCSS}. In section \ref{CDSS} the existing average consensus techniques applied to distributed cooperative spectrum sensing are explored, while a novel distributed average consensus rule is formulated in section \ref{CDSS}. Numerical results supporting our finding are analysed in section \ref{sec:results}. Concluding remarks are offered in section \ref{sec:Concl}. For reference, and due to the large number of abbreviations deployed in this paper, a list of acronyms is summarized in Table \ref{tab:acron}.

 \begin{table}[!htpb]
	\renewcommand{\arraystretch}{1.0}
	\caption{Acronyms}
	\label{tab:acron}
	\centering
	\small
	\begin{tabular}{rl}\hline
		3C & Cooperative Consensus Convergence\\
		AC & Average Consensus\\
		ADMM & Alternating Direction Method of Multipliers\\
		AF & Amplify-and-Forward\\
		AWGN & Addictive White Gaussian Noise\\
		BF & Belief Propagation\\
		CLT & Central Limit Theorem\\
		CR & Cognitive Radio \\ 
		CSS & Cooperative Spectrum Sensing\\
		DAC & Distributed Average Consensus\\
		DCSS & Decentralised Cooperative Spectrum Sensing\\
		DF & Decode-and-Forward\\
		ED & Energy Detector\\
		EGC & Equal Gain Combining\\
		FC & Fusion Centre\\
		FCC & Federal Communication Commission\\
		GoF & Goodness-of-Fit\\
		IWAC & Improved Weighted Average Consensus \\
		MCS & Monte-Carlo Simulation\\
		MRC & Maximal Ratio Combining\\
		NLOS & Non-Line-of-Sight \\
		PU & Primary User\\
		ROC & Receiver Operating Characteristic\\	
		SE & Spectral Efficiency\\
		SLEM & Second Largest Eigenvalues Modulo\\
		SNR & Signal-Noise Ratio\\
		SU & Secondary User\\
		WAC & Weighted Average Consensus\\
		WAC-AE & Weighted Average Consensus Accuracy Exchange\\
		WRAN & Wireless Regional Area Network\\
\hline
\end{tabular}
\end{table}

\section{System Model}\label{sec:model}

We consider a cognitive wireless network with $N$ SUs and one PU (single-band system). All SUs sense the spectrum and cooperate with each other to determine the final decision. We can define two stages in the process: the sensing phase and the decision phase. In the sensing phase, each SU senses the spectrum. In this work, we adopt the energy detector (ED) because it requires lower design complexity and no {prior information} of the primary user (PU), but with a suboptimal performance. For the $i$-th SU, the received signal is defined as:
\begin{equation}
 y_{i} (t) =
  \begin{cases}
    n_{i} (t)       & \quad ,\mathcal{H}_{0}\\
    h_{i}s_{i}(t) + n_{i} (t)  &\quad ,\mathcal{H}_{1}\\
  \end{cases}
\end{equation}
where $\mathcal{H}_{0}$ is the hypothesis that the channel is idle, $\mathcal{H}_{1}$ is the hypothesis that the channel is busy, $y_{i} (t)$ is the received signal by the $i$-th SU, $s_{i}(t)$ is a BPSK modulated signal transmitted by the PU, $n_{i} (t)$ is the AWGN noise and $h_{i}$ is the amplitude channel gain that represents the multipath Rayleigh fading channel effect.

\subsection{Energy Detector} \label{ED}

Using the ED \cite{Urkowitz67}, each SU calculates a decision statistic $T_{i}$ over a detection interval of $N_s$ samples. The statistic test of the $i$-th SU can be written as:
\begin{equation}
	T_{i} = \sum_{t=0}^{N_s} {\vert y_{i} (t) \vert}^{2}.
\end{equation}
Hence, it is compared with a predefined threshold $\lambda$, and the decision of each user is:
\begin{equation}
 	T_{i} \underset{{\mathcal H_{0}}}{\overset{\mathcal H_{1}}{\gtrless}}  \lambda.
\end{equation}

The value $T_{i} \in \mathbb{R}^{+}$ under {AWGN channels presents a statistical distribution} given by \cite{Li10}:
\[ T_{i}  \sim 
  \begin{cases}
    \chi_{2TW}^2    & \quad ,\mathcal{H}_{0}\\
    \chi_{2TW}^2 (2\gamma)  &\quad ,\mathcal{H}_{1}\\
  \end{cases}
\]
where $  \chi_{2TW}^2$ and $ \chi_{2TW}^2 (2\gamma)$ is the central and non-central Chi-square distributions with $2TW = 2N_{s}$ degrees of freedom and non-centrality parameter of $2\gamma$.

Furthermore, under Rayleigh channels, the channel gain is random, and the distribution of the decision statistic becomes \cite{Li10}:
\[ T_{i}  \sim 
  \begin{cases}
    \chi_{2TW}^2    & \quad ,\mathcal{H}_{0}\\
    \chi_{2TW}^2 (2\gamma) + {\exp}(2 \overline{\gamma} + 2)  &\quad ,\mathcal{H}_{1}\\
  \end{cases}
\]
where the exponential distribution ${\exp}(2 \overline{\gamma} + 2)$ presents parameter $2 \overline{\gamma} + 2$. The $\overline{\gamma}$ is the average SNR and $\gamma$ is the instantaneous SNR.

Using the central limit theorem (CLT) for a large number of samples, the $i$-th statistic test $T_{i}$ is asymptotically normally distributed, with mean and variance given by \cite{Mitola15}:
\[ \mathbb{E}(T_{i}) =
  \begin{cases}
    N_{s} \sigma_{i}^2    & \quad ,\mathcal{H}_{0}\\
    (N_{s} + \eta_{i}) \sigma_{i}^2  &\quad ,\mathcal{H}_{1}\\
  \end{cases}
\]
\[ \text{var}(T_{i}) =
  \begin{cases}
    2 N_{s} \sigma_{i}^4    & \quad ,\mathcal{H}_{0}\\
    2(N_{s} + 2\eta_{i}) \sigma_{i}^4  &\quad ,\mathcal{H}_{1}\\
  \end{cases}
\]
where the $\sigma_{i}^2$ is the noise variance, while the $i$-th SNR of the SUs is given by:

\begin{equation}\label{eq:SNRi}
	{\eta_{i} = \sum_{t = 0}^{N_{s}} \frac{s_{i}^2 |h_{i}|^{2}}{\sigma_{i}^2} }.
\end{equation}

\section{Cooperative Spectrum Sensing} \label{CCSS}
Centralised versus distributed cooperative spectrum sensing strategies are revised in this section. Besides, dynamic communication channels are modelled with the aid of graph theory.

\subsection{Centralised Cooperative Spectrum Sensing }

Centralised cooperative {spectrum sensing methods need a fusion centre} (FC) to operate.
A cooperative network uses the SUs to sense the spectrum and an FC for the final decision.

In the FC, there are some ways to determine the final decision, including the hard combining, which can use different decision rule, such as the OR, Majority and AND  rules, and the soft combining way, that is based in EGC combining and MRC combining rules.

\subsubsection{Hard Decision} \label{HD}

In the hard combining spectrum sensing, $N$ cooperative SUs are sensing the total spectrum cooperatively; the final decision is given by the following metric, called final statistical test $T_{f}^{\textsc{hd}}$:
\begin{equation}
	T_{f}^{\textsc{hd}}  = \sum_{i=1}^{N} {d_i},
\end{equation}
where the ${d_i}$ is the decision of the $i$-th SU and ${d_i} \in \lbrace0,1\rbrace$, being ${d_i} = 0$ if PU is absent or ${d_i} = 1$ if the PU is present in the band. The performance is given in terms of probability of detection \cite{Ibnkahla14}:
\begin{equation}
	{\rm P}_d^{\textsc{hd}} = \sum_{q = i}^{N} {{N}\choose{q}} \left[ \prod_{\gamma = 1}^{q} {\rm P}_{d}^{\gamma} \cdot \prod_{\beta = 1}^{N-q} (1 - {\rm P}_{d}^{\beta}) \right].
\end{equation}

The {\it Or-And-Majority} rules allow to describe different ways to construct the threshold $\lambda$ in a hard combining centralised cooperative spectrum sensing scheme; in summary, 
\begin{itemize}
 \item{\it Or} rule: $\lambda = 1$. The rule {\it OR} ensure minimum interference to the PUs. The PU is considered present in a band, if only a single PU send $1$ to fusion centre in its decision,{\it i.e.,} if the statistic test of some SU add one. It can be seen that the OR rule is very conservative for the SUs to access the licensed band. As such, the chance of causing interference to the PU is minimised;
 \item{\it And} rule: $\lambda = N$, where $N$ means the number of collaborative nodes sensing the same sub-band. It is an aggressive rule, ensuring high rate of transmission to the SUs. The PU is considered present in the band, if and only if all CRs collaborative nodes sensing the presence of PU in the band;
 \item{\it Majority} rule: $\lambda = \left\lceil{\frac{N}{2}}\right\rceil$. The PU is considered present in the band, if the majority of SUs send $1$ to the FC. The function $\lceil{\cdot}\rceil$ is the ceil function.
 \end{itemize}
  
\subsubsection{Soft Decision} \label{SD}

The statistic test of the $i$-th SU is sent to the coordinator, the fusion center (FC), which collects all values of test statistic from all SUs. Then the overall statistic test $T_{f}^{\textsc{sd}}$ is calculated at the coordinator node as:
\begin{equation}
	T_{f}^{\textsc{sd}} = \sum_{i = 1}^{N} \rho_{i}T_{i}.
\end{equation}
If all $\rho_{i}$ is equal to each user, the cooperative technique has the equal gain combining (EGC) performance. If the values of $\rho_{i}$ is proportional to $\text{SNR}$, then the performance is same to maximum rate combining (MRC).

As in the case of cooperative SS, and following \cite{Mitola15}, the final decision $T_f$ is normally distributed, with mean and variance given by:
\begin{equation}
 \mathbb{E}(T_{f}^{\textsc{sd}}) =
  \begin{cases}
    \sum_{i = 1}^{N} \rho_{i} N_{s} \sigma_{i}^2  & \quad ,\mathcal{H}_{0}\\
    \sum_{i = 1}^{N} (N_{s}\sigma_{i}^2(1 + \eta_{i}))  &\quad ,\mathcal{H}_{1}\\
  \end{cases}
\end{equation}
\begin{equation}
 \text{var}(T_{f}^{\textsc{sd}}) =
  \begin{cases}
    \sum_{i = 1}^{N} \rho_{i}^2 2 N_{s} \sigma_{i}^4 & \quad ,\mathcal{H}_{0}\\
     \sum_{i = 1}^{N} \rho_{i}^2 (2N_{s}\sigma_{i}^4(1 + 2\eta_{i}))  &\quad ,\mathcal{H}_{1}\\
  \end{cases}
\end{equation}

As discussed in \cite{Nurellari16}, the performance of the centralised soft CSS can be evaluated for a given ${\rm P}_f$ as:
\begin{equation}\label{eq:Pcentralized}
	{\rm P}_{d}^{\textsc{c}} = Q\left(\frac{Q^{-1}({\rm P}_f)\sqrt{\text{var}(T_{f}^{\textsc{sd}}|\mathcal{H}_{0})} - \mathbb{E}(T_{f}^{\textsc{sd}}|\mathcal{H}_{1}) + \mathbb{E}(T_{f}^{\textsc{sd}}|\mathcal{H}_{0})}{\sqrt{\text{var}(T_{f}^{\textsc{sd}}|\mathcal{H}_{1})}}\right),
\end{equation}
where $Q(\cdot)$ is the Gaussian Q-function.

\subsection{Fixed and Dynamic DCSS Networks based on Graph Theory} \label{GT}
The fixed-nodes and mobile-node cooperative networks are modelled based on graph theory description. We define the elements of the network as the vertices and the communication links as the graph edges.

\subsubsection{Graph Theory Results}
To illustrate the graph theory-based description of a DCSS network, Fig. \ref{ex6} depicts an example of a distributed cooperative spectrum sensing (DCSS) network with 6 SUs keeping a bidirectional (full-duplex) one-hop communication. From the graph theory, this network presents $6$ vertices (or nodes) and $6$ edges.

\begin{figure}[!htbp]
	\centering
    \includegraphics[width=.42\textwidth]{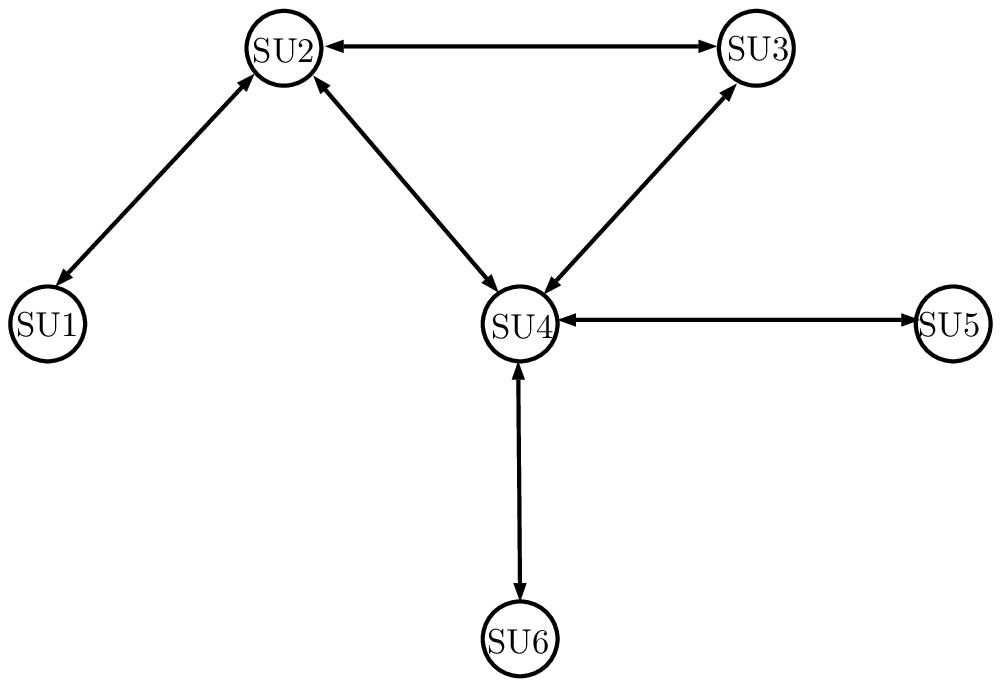}
	\caption{Descentralised cooperative scheme with $6$ SUs \cite{Kailkhura17}.}
	\label{ex6}
\end{figure}

In this paper, we will consider a decentralised network operating under fixed, as well as mobile communication channels.

\subsubsection{Fixed Communication Channel} \label{FG}

We consider that there are $N$ SUs interconnected and sharing the same channel bandwidth and links.  The network is modelled as a connected graph $\textsc{G} = (\mathscr{V}, \mathscr{E})$, where $\mathscr{V} = \lbrace1, 2, ..., N\rbrace$ is the vertices of the graph,{\it i.e.} the SUs contained in the network and $\mathscr{E} \subseteq \mathscr{V}\times \mathscr{V}$ is the edges, that representing the channel links between the SUs. The set of neighbors for the $i$-th SU is represented as $\mathscr{N}_{i} = \lbrace j \in \mathscr{V} : (i,j)\in \mathscr{E}\rbrace$, the cardinality (number of elements in the set) as $\aleph_{i}$ and the maximum cardinality as ${\rm max}(\aleph_{i})$.

The symmetric adjacent matrix of the graph $\mathscr{G}$ is ${\bf G} = [{\rm g}_{ij}]_{N \times N}$, where ${\rm g}_{ij} = 1$ if $(i,j) \in \mathscr{E}$, {\it i.e.}, when the $i$-th SU communicates with the $j$-th SU and ${\rm g}_{ij} = 0$ otherwise.

The Laplacian matrix of the graph $\mathscr{G}$ is defined as $\bf L = N - G$, where $\bf N$ is the maximum cardinality diagonal matrix of the graph defined as ${\bf N} = {\text{diag}}(\aleph_{1}, ..., \aleph_{N})$. Thus, the Laplacian matrix ${\bf L} = [l_{ij}]_{N \times N}$ can be constructed as:
\begin{equation} \label{lij}
 l_{ij} =
  \begin{cases}
    \aleph_{i}      & \quad ,{\text {if}} \quad i=j\\
    -1  &\quad ,{\text {if}} \quad j\in{\mathscr{N}_{i}}\\
    0 &\quad ,{\text {otherwise.}}
  \end{cases}
\end{equation}

To illustrate those definitions, the network presented in Fig. \ref{ex6}), which will be analysed in section \ref{NT6}, defines the following diagonal matrix with maximum cardinality:
\begin{equation} 
{\bf N}_6 = 
\begin{bmatrix}
    1 & 0 & 0 & 0 & 0 & 0  \\
    0 & 3 & 0 & 0 & 0 & 0  \\
    0 & 0 & 2 & 0 & 0 & 0  \\
    0 & 0 & 0 & 4 & 0 & 0  \\
    0 & 0 & 0 & 0 & 1 & 0  \\
    0 & 0 & 0 & 0 & 0 & 1  
\end{bmatrix},
\label{matrix6a}
\end{equation}
and the adjacency matrix takes the form:
\begin{equation} \label{matrix6b}
{\bf G}_6 = 
\begin{bmatrix}
    0 & 1 & 0 & 0 & 0 & 0  \\
    1 & 0 & 1 & 1 & 0 & 0  \\
    0 & 1 & 0 & 1 & 0 & 0  \\
    0 & 1 & 1 & 0 & 1 & 1  \\
    0 & 0 & 0 & 1 & 0 & 0  \\
    0 & 0 & 0 & 1 & 0 & 0  
\end{bmatrix}.
\end{equation}
Therefore, the Laplacian matrix for this network is given by:
\begin{equation} 
{\bf L}_6 = 
\begin{bmatrix}
    1 & -1 & 0 & 0 & 0 & 0  \\
    -1 & 3 & -1 & -1 & 0 & 0  \\
    0 & -1 & 2 & -1 & 0 & 0  \\
    0 & -1 & -1 & 4 & -1 & -1  \\
    0 & 0 & 0 & -1 & 1 & 0  \\
    0 & 0 & 0 & -1 & 0 & 1  
\end{bmatrix}.
\label{matrix6c}
\end{equation}

\subsubsection{Dynamic Communication Channel} \label{DG}
Similarly to the static communication channel, in the dynamic channel case,  the Laplacian matrix of the graph $\mathscr{G} (k)$ is defined as ${\bf L}(k) = {\bf N - G}(k)$, where $k$ is an integer that represents the time of network change,{\it i.e.,} the graph positions changes according to the time integer intervals, $\bf N$ is the maximum cardinality diagonal matrix of the graph defined as ${\bf N} = {\text{diag}}(\aleph_{1}, ..., \aleph_{N})$. Thus, the Laplacian matrix ${\bf L}(k) = [l_{ij}]_{N \times N}$ can be constructed similarly as (\ref{lij}).

A better description of the dynamic channel can be made taking into account a probability of connection (in the 
neighbours communication sense) that can be described by the {\it a priori} probability ${\rm Pr}_{\text{connection}} \in [0,1]$. The probability of link failure is ${\rm Pr}_{\text{fail}} = 1 - {\rm Pr}_{\text{connection}}$. When this probability is zero the channel is fixed and otherwise the network presents some mobility. Hence, the structure of the Laplacian matrix is ready modified considering the {\it a priori} probability of connection as:
\begin{equation}
 l_{pij} =
  \begin{cases}
    \sum_{j = 1}^{N} {\rm Pr}_{\text{connection}}     & \quad ,{\text {if}} \quad i=j\\
    -{\rm Pr}_{\text{connection}}  &\quad ,{\text {if}} \quad j\in{\mathscr{N}_{i}}\\
    0 &\quad ,{\text {otherwise.}}
  \end{cases}
\end{equation}

\section{Consensus-based Distributed Cooperative Spectrum Sensing} \label{CDSS}
Existing distributed consensus-based fusion techniques only ensure EGC performance; {such techniques are identified} as average consensus algorithm (AC) \cite{Mitola15}. Therefore, the EGC performance is inferior regarding the centralised MRC combining (optimal combining) schemes. Based on this, new consensus algorithms have been proposed in the literature to ensure MRC performance. {These} algorithms are {denominated} weighted average consensus (WAC) techniques \cite{Mitola15}. The performance of the WAC technique is closed to the MRC centralised combining (soft combining). However, the WAC algorithm has slow convergence when the case of unbalanced SNR at different SUs, that are directly related to the weights design.

\subsection{Average Consensus} \label{ACS}
In the average consensus (AC) method the estimation of the $i$-th SU energy is updated at the iteration time $k = 1, 2, ...$ according to the rule \cite{Yu10}:
\begin{equation} 
x_{i} (k + 1) = x_{i}(k) + \alpha \sum_{j \in \mathscr{N}_{i}}^{} {\rm g}_{ij}(x_{j} (k) - x_{i} (k)),
\end{equation}
where $\alpha$ is the iteration step size satisfying $0 < \alpha < ({\max(\aleph_{i})})^{-1}$. The elements of the adjacent matrix $ {\rm g}_{ij}$ define de network topology.

The initial statistic before the fusion at the iteration $k = 0$ is considered as $x_{i} (0) = T_{i}$.

For the AC method, the final convergence is obtained as \cite{Mitola15}:
\begin{equation}
	x_{i}(k) \rightarrow x^{*} = \frac{\sum_{i=1}^{N} x_{i}(0)}{N}, \qquad \text{when} \quad k \rightarrow \infty,
\end{equation} 
while the final decision is compared with a pre-defined threshold $\lambda$ and has the form:
\begin{equation} 
 {\text {Decision}} =
  \begin{cases}
      \mathcal{H}_{0}   & \quad , x^{*} > \lambda\\
    \mathcal{H}_{1}  &\quad , {\text {otherwise.}}
  \end{cases}
\end{equation}

In the compact vector-matrix form, the rule can be described as:
\begin{equation}
	{\bf x}(k+1) = {\bf P}_\textsc{ac}{\bf x}(k),
\end{equation}
where ${\bf P}_\textsc{ac} = {\bf I} - \alpha({\bf N} - {\bf G})$ is the Perron matrix and can be written also as ${\bf P}_\textsc{ac} = {\bf I} - \alpha{\bf L}_\textsc{ac}$. Here, the Laplacian matrix is ${\bf L}_\textsc{ac} = {\bf L}$, as defined in the last section. Hence, the performance regarding probability of detection, for a given fail probability at the  $i$-th SU, can be described in the same way of Eq. \eqref{eq:Pcentralized},
but now considering distributed soft CSS decisions.

Algorithm \ref{algo:AC} describes a pseudocode of AC method.

\begin{algorithm}[!htbp]
	\caption{ - Average Consensus - (AC)} \label{algo:AC}
	\begin{algorithmic}[1]
	\small
		\STATE \textbf{Input: $\alpha$, $K$, $\bf{T}$}
		\FOR {$k=0$ \textbf{to} $K - 1$}
		\STATE {${\bf{x}}(0) = \bf{T}$}
		\STATE {${\bf P}_\textsc{ac} = {\bf I} - \alpha {\bf L}_\textsc{ac}$}	
		\STATE {${\bf x}(k+1) = {\bf P}_\textsc{ac}{\bf x}(k)$}
		\ENDFOR
		\STATE \textbf{Output: $\bf{x}$}
	\end{algorithmic}
\end{algorithm}

\subsection{Weighted Average Consensus} \label{WACS}
The weighted average consensus (WAC) rule can approach to soft combining performance (MRC). The WAC rule is given by \cite{Mitola15}, \cite{Mitola11}:
\begin{equation}
	x_{i} (k + 1) = x_{i}(k) + \frac{\alpha}{\omega_{i}} \sum_{j \in \mathscr{N}_{i}}^{} {\rm g}_{ij}(x_{j} (k) - x_{i} (k)),
\end{equation}
where $\omega_{i}$ is the weighted ratio according to the channel condition of the $i$-th SU and $\alpha$ is the iteration step size satisfying $0 < \alpha < \left({\rm{max}}(\aleph_{i})\right)^{-1}$. The final convergence is obtained as \cite{Mitola15}:
\begin{equation}
	x_{i}(k) \rightarrow x^{*} = \frac{\sum_{i=1}^{N}\omega_{i}x_{i}(0)}{\sum_{i=1}^{N}\omega_{i}}, \qquad \text{when} \quad k \rightarrow \infty.
\end{equation} 
Moreover, when the values of $\omega_{i}$ is equal to the all SUs, the final convergence is similar to EGC combining,{\it i.e.,} the same of the AC method.

In the WAC algorithm, the weights are related to the channel conditions of the $i$-th SU. According \cite{Mitola15}, a  {\it sub-optimal weights} for the WAC spectrum sensing receiver operating under Rayleigh fading channels can obtain as an estimative of the SNR state channel:
\begin{equation}\label{eq:wi_WAC}
	\omega_{i} = \frac{1}{2 \ell} \sum_{\wp = k - \ell}^{k} (T_{i,\wp} - 2N_{s}),
\end{equation}
where $\ell$ is the length of the estimation window and $T_{i,\wp}$ is the $\wp$-th measurement (statistic test) of the $i$-th SU.

For the AWGN channel, the optimal weights are simply calculated solving an optimisation problem that maximises the {\it deflection coefficient} {\cite{Mitola15}}:
\begin{equation}\label{eq:deflection_coefs}
	\omega_{i} = \frac{\eta_i}{\sigma_{i}^{2}},
\end{equation}
where $\eta_i$ is defined in \eqref{eq:SNRi}. 

Using the WAC in the compact form, the discrete consensus rule can be represented in the vector-matrix form as \cite{Mitola15}:
\begin{equation}
	{\bf x}(k+1) = {\bf P}_\textsc{wac}{\bf x}(k),
\end{equation}
where the Perron matrix can be written as ${\bf P}_\textsc{wac} = {\bf I} - \alpha{\bf \Delta}^{-1}{\bf L}_\textsc{wac}$. The diagonal matrix ${\bf \Delta} = \text{diag}(\omega_{1}, ..., \omega_{N})$ is the weight diagonal matrix. Here, the Laplacian matrix ${\bf L}_\textsc{wac} = {\bf L}$.

The performance can be obtained in the same way of Eq. \eqref{eq:Pcentralized}, but now considering distributed soft decisions. The pseudocode for the WAC algorithm is depicted in Algorithm \ref{algo:WAC}:

\begin{algorithm}[!htbp]
	\caption{ - Weighted Average Consensus - (WAC)} \label{algo:WAC}
	\begin{algorithmic}[1]
	\small
		\STATE \textbf{Input: $\alpha$, $K$, $\bf{\Delta}$, $\bf{T}$}
		\FOR {$k=0$ \textbf{to} $K - 1$}
		\STATE {${\bf{x}}(0) = \bf{T}$}
		\STATE {${\bf P}_\textsc{wac} = {\bf I} - \alpha\bf{\Delta}^{-1}{\bf L}_\textsc{wac}$}	
		\STATE {${\bf x}(k+1) = {\bf P}_\textsc{wac}{\bf x}(k)$}
		\ENDFOR
		\STATE \textbf{Output: $\bf{x}$}
	\end{algorithmic}
\end{algorithm}

\subsection{Weighted Average Consensus Accuracy Exchange}

Recently, the {\it weighted average consensus accuracy exchange} (WAC-AE) was proposed \cite{Soatti16} and  \cite{Kailkhura17} in a different context treated herein, i.e., respectively to solve the  localisation problem in networks equipped with several fixed nodes and to deal with security issues in a cognitive network. In the new context of DCSS,  the WAC-AE rule to is given by:
\begin{equation}
x_{i} (k + 1) = x_{i}(k) + \alpha \sum_{j \in \mathscr{N}_{i}}^{} \omega_{j} {\rm g}_{ij} (x_{j} (k) - x_{i} (k)),
\end{equation}
where $\omega_{j}$ is the weighted ratio according to the channel condition of the $j$-th SUs. The convergence is guaranteed taking the step size among $0 < \alpha < \left({\rm{max}}_{i} \sum_{j \in \mathscr{N}_{i} } \omega_{j}\right)^{-1}$. The associated final convergence is obtained as:
\begin{equation}
	x_{i}(k) \rightarrow x^{*} = \frac{\sum_{i=1}^{N}\omega_{i}x_{i}(0)}{\sum_{i=1}^{N}\omega_{i}}, \qquad \text{when} \quad k \rightarrow \infty.
\end{equation} 

In the WAC-AE algorithm the weights are related to the channel conditions of the $j$-th SUs neighbours. Adopting the {\it sub-optimal weights} for Rayleigh channels, results:
\begin{equation}
	\omega_{j} = \frac{1}{2 \ell} \sum_{\wp = k - \ell}^{k} (T_{j,\wp} - 2N_{s}),
\end{equation}
where $\ell$ is the length of the estimation window and $T_{j,\wp}$ is the $\wp$-th measurement (statistic test) of the $j$-th SUs. Besides, for the AWGN channel, the optimal weights are simply calculated as in \eqref{eq:deflection_coefs}.

In the compact form, the discrete WAC-AE consensus rule can be represented in the vector-matrix form as:
\begin{equation}
	{\bf x}(k+1) = {\bf P}_\textsc{wac-ae}{\bf x}(k),
\end{equation}
where the Perron matrix is ${\bf P}_\textsc{wac-ae} = {\bf I} - \alpha{\bf L}_\textsc{wac-ae}$. The modified Laplacian matrix ${\bf L}_\textsc{wac-ae} = [l_{ij_{\textsc{wac-ae}}}]_{N \times N}$ is construct as:

\begin{equation} 
 l_{ij_{\textsc{wac-ae}}} =
  \begin{cases}
    \sum_{j \in \mathscr{N}_{i}}^{} \omega_{j}     & \quad ,{\text {if}} \quad i=j\\
    -\omega_{j}  &\quad ,{\text {if}} \quad j\in{\mathscr{N}_{i}}\\
    0 &\quad ,{\text {otherwise.}}
  \end{cases}
\end{equation}

The pseudocode of the WAC-AE is presented in Algorithm \ref{algo:WAC-AE}.

\begin{algorithm}[!htbp]
	\caption{ - Weighted Average Consensus - Accuracy Exchange - (WAC-AE)} \label{algo:WAC-AE}
	\begin{algorithmic}[1]
	\small
		\STATE \textbf{Input: $\alpha$, $K$, $\bf{\Delta}$, $\bf{T}$}
		\FOR {$k=0$ \textbf{to} $K - 1$}
		\STATE {${\bf{x}}(0) = \bf{T}$}
		\STATE {${\bf P}_\textsc{wac-ae} = {\bf I} - \alpha\bf{\Delta}^{-1}{\bf L}_\textsc{wac-ae}$}	
		\STATE {${\bf x}(k+1) = {\bf P}_\textsc{wac-ae}{\bf x}(k)$}
		\ENDFOR
		\STATE \textbf{Output: $\bf{x}$}
	\end{algorithmic}
\end{algorithm}

\section{Improved Weighted Average Consensus} \label{IWACS}
In this section, we propose a new rule to weighted average consensus for distributed cooperative spectrum sensing purpose. The new rule improves the weighted average consensus (IWAC), being described by the following updating equation:
\begin{equation}
x_{i} (k + 1) = x_{i}(k) + \frac{\alpha}{\omega_{i}} \sum_{j \in \mathscr{N}_{i}}^{} \omega_{j} {\rm g}_{ij} \left[x_{j} (k) - x_{i} (k)\right],
\end{equation}
where $\omega_{j}$ is the weighted ratio according to the channel condition of the $j$-th SUs and $\omega_{i}$ is the weight according to the channel condition of the $i$-th SU. The convergence is guaranteed taking the step size in the interval:
\begin{equation}
0 < \alpha < \left({\rm{max}}_{i} \sum_{j \in \mathscr{N}_{i} }\omega_{j}\right)^{-1}
\end{equation}
The final convergence to the IWAC method is obtained as:
\begin{equation}
	x_{i}(k) \rightarrow x^{*} = \frac{\sum_{i=1}^{N}\omega_{i}x_{i}(0)}{\sum_{i=1}^{N}\omega_{i}}, \qquad \text{when} \quad k \rightarrow \infty.
\end{equation} 

Moreover, we can adopt the same {\it sub-optimal weights} of the WAC rule, \eqref{eq:wi_WAC}, for the distributed cooperative SSNs operating under Rayleigh fading channels as:
\begin{equation}
	\omega_{\xi} = \frac{1}{2 \ell} \sum_{\wp = k - \ell}^{k} (T_{\xi,\wp} - 2N_{s}),
\end{equation}
where $\ell$ is the length of the estimation window, $\xi \in (i,j)$, $T_{\xi,\wp}$ is the $\wp$-th measurement (statistic test) of SU. Again, for the AWGN channel the weights are calculated as in \eqref{eq:deflection_coefs}, {\it i.e.}, $\omega_{\xi} = \frac{\eta_\xi}{\sigma_\xi^{2}}$. 

In the compact form, the discrete consensus rule can be represented in the vector-matrix form as:
\begin{equation}\label{eq:updating_IWAC}
	{\bf x}(k+1) = {\bf P}_\textsc{iwac}{\bf x}(k),
\end{equation}
where the modified Perron matrix now is defined as: 
\begin{equation}\label{eq:Perron_IWAC}
	{\bf P}_\textsc{iwac} = {\bf I} - \alpha{\bf \Delta}^{-1}{\bf L}_{\textsc{iwac}}.
\end{equation}

In the proposed IWAC spectrum sensing, the modified Laplacian matrix ${\bf L}_\textsc{iwac} = [l_{ij_{\textsc{iwac}}}]_{N \times N}$ is construct as
\begin{equation} 
 l_{ij_{\textsc{iwac}}} =
  \begin{cases}
    \sum_{j \in \mathscr{N}_{i}}^{} \omega_{j}     & \quad ,{\text {if}} \quad i=j\\
    -\omega_{j}  &\quad ,{\text {if}} \quad j\in{\mathscr{N}_{i}}\\
    0 &\quad ,{\text {otherwise.}}
  \end{cases}
\end{equation}

The matrix ${\bf \Delta} = \text{diag}(\omega_{1}, ..., \omega_{N})$ is the weight diagonal matrix.
Notice that the {\it receiver operator characteristics} (ROC) performance for the IWAC spectrum sensor can be obtained in a same way of Eq. \eqref{eq:Pcentralized}, but taking into account distributed soft CSS decisions, as discussed in subsection \ref{sec:ROC}. 

A pseudo-code for the IWAC implementation considering static and dynamic channel environments is presented in Algorithm \ref{algo:IWAC}.

\begin{algorithm}[!htbp]
	\caption{ - Improved Weighted Average Consensus - (IWAC)} \label{algo:IWAC}
	\begin{algorithmic}[1]
	\small
		\STATE \textbf{Input: $\alpha$, $K$, $\bf{\Delta}$, $\bf{T}$}
		\FOR {$k=0$ \textbf{to} $K - 1$}
		\STATE {${\bf{x}}(0) = \bf{T}$}
		\STATE {${\bf P}_\textsc{iwac} = {\bf I} - \alpha\bf{\Delta}^{-1}{\bf L}_\textsc{iwac}$}	
		\STATE {${\bf x}(k+1) = {\bf P}_\textsc{iwac}{\bf x}(k)$}
		\ENDFOR
		\STATE \textbf{Output: $\bf{x}$}
	\end{algorithmic}
\end{algorithm}

\subsection{Convergence Analysis for the IWAC Algorithm}\label{sec:converg_IWAC}
In this section, the convergence analysis for the IWAC algorithm is developed taking into account both system scenarios, static and dynamic SU's in the CR networks.

\subsubsection{Fixed Networks}
Using the IWAC in the compact form, the discrete consensus rule can be represented in the vector-matrix form by the updating equation \eqref{eq:updating_IWAC}, where the Perron matrix ${\bf P}_{\textsc{iwac}}$ is given by \eqref{eq:Perron_IWAC}.  

The IWAC rule convergence depends on the convergence of the infinite stochastic matrix product. Based on the Perron-Frobenius Theorem \cite{Mitola15}, \cite{Horn85} we find:
\begin{equation}\label{eq:Prod_Perron_IWAC}
	{\bf P}_{\infty_{\textsc{iwac}}}  =  \mathop{\lim}\limits_{k \to\infty} \prod_{\ell = 1}^{k} {\bf P}_{\ell_{\textsc{iwac}}} = \frac{{\bf 1}{\bm \omega}^{T}}{\bm{\omega}^{T}{\bf 1}},
\end{equation}
where $\bm{\omega}^T = [\omega_{1}\, \omega_{2}\, \ldots\, \omega_{N}]$ and vector $\bm{1} = [1\, 1\, \ldots\, 1]^T$ has dimension $N\times 1$. 

The proof can be obtained considering that the matrix ${\bf P}_{\textsc{iwac}}$ is a primitive non-negative matrix, {\it i.e.,} the $k$-th power is positive for some natural number $k$ with left and right eigenvectors ${\bf u}$ and ${\bf v}$, respectively, that satisfy ${\bf P_{\textsc{iwac}}}{\bf v} = {\bf v}$ and ${\bf u}^{T}{\bf P_{\textsc{iwac}}} = {\bf u}^{T}$. The Perron-Frobenius Theorem ensures that $\lim_{k\rightarrow\infty} \prod_{\ell = 1}^{k} {\bf P}_{\ell_{\textsc{iwac}}}  = \frac{{\bf vu}^{T}}{{\bf v}^{T}{\bf u}}$.

\begin{lemma}
Let $\mathscr {G}$ a connected graph with $N$ vertices. The Perron matrix $\bf P_{\textsc{iwac}}$, with $0 < \alpha < {({{\rm{max}}_{i} \sum_{j \in \mathscr{N}_{i} }\omega_{j}})^{-1}}$ has the following properties:
\begin{itemize}
\item[\textsc{p.1.}] The Perron matrix $\bf P_{\textsc{iwac}}$ is a nonnegative matrix with left eigenvector $\bm{\omega}$ and right eigenvector ${\bf 1}$;
\item[\textsc{p.2.}] All eigenvalues of Perron matrix $\bf P_{\textsc{iwac}}$ are in a unit circle;
\item[\textsc{p.3.}] The Perron matrix $\bf P_{\textsc{iwac}}$ is a primitive matrix.
\end{itemize}
\end{lemma}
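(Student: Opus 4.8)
The plan is to verify properties \textsc{p.1}--\textsc{p.3} directly from the explicit entries of ${\bf P}_{\textsc{iwac}} = {\bf I} - \alpha{\bf \Delta}^{-1}{\bf L}_{\textsc{iwac}}$ and then to feed these three properties into the Perron--Frobenius argument already recalled in \eqref{eq:Prod_Perron_IWAC}. Reading off the construction of ${\bf L}_{\textsc{iwac}}$ and of ${\bf \Delta}$, the Perron matrix has entries
\[
\left({\bf P}_{\textsc{iwac}}\right)_{ij} =
\begin{cases}
1 - \frac{\alpha}{\omega_{i}} \sum_{\ell \in \mathscr{N}_{i}} \omega_{\ell}, & i = j, \\
\frac{\alpha\, \omega_{j}}{\omega_{i}}, & j \in \mathscr{N}_{i}, \\
0, & \text{otherwise},
\end{cases}
\]
so that every claim below reduces to elementary bookkeeping on these numbers.

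For \textsc{p.1}, I would first observe that the off-diagonal entries are nonnegative with no further assumption, since $\alpha > 0$ and every channel-dependent weight $\omega_{i}$ is strictly positive. The only entries whose sign is not automatic are the diagonal ones, and this is exactly the point at which the hypothesis $0 < \alpha < \big(\max_{i} \sum_{j \in \mathscr{N}_{i}} \omega_{j}\big)^{-1}$ is used: it is chosen precisely so that $\frac{\alpha}{\omega_{i}} \sum_{\ell \in \mathscr{N}_{i}} \omega_{\ell} \le 1$ for every $i$, whence $\left({\bf P}_{\textsc{iwac}}\right)_{ii} \ge 0$ and ${\bf P}_{\textsc{iwac}}$ is a nonnegative matrix. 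For the right eigenvector I would use that each row of ${\bf L}_{\textsc{iwac}}$ sums to zero by construction (the diagonal term $\sum_{\ell \in \mathscr{N}_{i}} \omega_{\ell}$ cancels the off-diagonal terms $-\omega_{j}$, $j \in \mathscr{N}_{i}$), so ${\bf L}_{\textsc{iwac}} {\bf 1} = {\bf 0}$ and therefore ${\bf P}_{\textsc{iwac}} {\bf 1} = {\bf 1}$. For the left eigenvector I would substitute $\bm{\omega}^{T}$ and exploit the identity $\bm{\omega}^{T} {\bf \Delta}^{-1} = {\bf 1}^{T}$, which reduces $\bm{\omega}^{T} {\bf P}_{\textsc{iwac}} = \bm{\omega}^{T}$ to a direct computation on the columns of ${\bf L}_{\textsc{iwac}}$, carried out using the symmetry $j \in \mathscr{N}_{i} \Leftrightarrow i \in \mathscr{N}_{j}$ of the undirected graph $\mathscr{G}$.

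For \textsc{p.2} and \textsc{p.3} I would proceed as follows. Once \textsc{p.1} holds, ${\bf P}_{\textsc{iwac}}$ is nonnegative with all row sums equal to $1$, i.e.\ it is row-stochastic. By Gershgorin's disc theorem \cite{Horn85}, each eigenvalue of ${\bf P}_{\textsc{iwac}}$ lies in a disc centred at $\left({\bf P}_{\textsc{iwac}}\right)_{ii} \ge 0$ with radius $\sum_{j \neq i} \left({\bf P}_{\textsc{iwac}}\right)_{ij} = 1 - \left({\bf P}_{\textsc{iwac}}\right)_{ii}$; every such disc is contained in the closed unit disc $\{ z \in \mathbb{C} : |z| \le 1 \}$, and since $1$ is an eigenvalue (with eigenvector ${\bf 1}$) the spectral radius equals $1$ --- this gives \textsc{p.2}. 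For \textsc{p.3}, I would split primitivity into irreducibility and aperiodicity: irreducibility follows because the off-diagonal sparsity pattern of ${\bf P}_{\textsc{iwac}}$ coincides with the adjacency pattern of $\mathscr{G}$, so connectedness of $\mathscr{G}$ transfers directly to ${\bf P}_{\textsc{iwac}}$; aperiodicity follows because the \emph{strict} inequality on $\alpha$ keeps at least one diagonal entry strictly positive, and an irreducible nonnegative matrix possessing a positive diagonal entry is aperiodic. Hence ${\bf P}_{\textsc{iwac}}$ is primitive, so some power ${\bf P}_{\textsc{iwac}}^{k}$ is entrywise positive, which is the hypothesis invoked for \eqref{eq:Prod_Perron_IWAC}.

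The step I expect to be the genuine obstacle is the nonnegativity bound inside \textsc{p.1} (and its strict counterpart used for aperiodicity in \textsc{p.3}): this is the only place where the admissible range of the step size $\alpha$ is really exploited, and it is what pins that range down, so one must check carefully that the chosen bound controls the diagonal term $1 - \frac{\alpha}{\omega_{i}} \sum_{\ell \in \mathscr{N}_{i}} \omega_{\ell}$ for every node $i$ simultaneously. The two eigenvector identities, the Gershgorin estimate, and the implication connectedness $\Rightarrow$ irreducibility are then routine; assembling \textsc{p.1}--\textsc{p.3} with the Perron--Frobenius theorem finally yields ${\bf P}_{\infty_{\textsc{iwac}}} = {\bf 1}\bm{\omega}^{T} / (\bm{\omega}^{T}{\bf 1})$ and thereby the claimed consensus value $x^{*}$.
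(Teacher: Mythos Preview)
Your strategy coincides with the paper's: verify ${\bf P}_{\textsc{iwac}}{\bf 1}={\bf 1}$ and $\bm\omega^{T}{\bf P}_{\textsc{iwac}}=\bm\omega^{T}$ by direct substitution, appeal to Gershgorin for \textsc{p.2}, and attribute primitivity to the step-size restriction. The paper's own proof is a three-sentence sketch naming exactly these three ingredients, so you are expanding the same outline with more detail rather than taking a different route.

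There is, however, a genuine gap in your nonnegativity check inside \textsc{p.1} (and hence in the aperiodicity step of \textsc{p.3}). You assert that the hypothesis $0<\alpha<\big(\max_{i}\sum_{j\in\mathscr N_{i}}\omega_{j}\big)^{-1}$ forces $\frac{\alpha}{\omega_{i}}\sum_{\ell\in\mathscr N_{i}}\omega_{\ell}\le 1$ for every $i$, but the stated bound only gives $\alpha\sum_{\ell\in\mathscr N_{i}}\omega_{\ell}<1$; after dividing by $\omega_{i}$ you obtain $\frac{\alpha}{\omega_{i}}\sum_{\ell\in\mathscr N_{i}}\omega_{\ell}<\frac{1}{\omega_{i}}$, which can exceed $1$ whenever $\omega_{i}<1$. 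Concretely, on two adjacent nodes with $\omega_{1}=0.1$, $\omega_{2}=10$ and $\alpha=0.05$ (well inside the allowed range, since $\max_{i}\sum_{j\in\mathscr N_i}\omega_{j}=10$), one gets $({\bf P}_{\textsc{iwac}})_{11}=1-\tfrac{0.05}{0.1}\cdot 10=-4$, so the matrix is not nonnegative and the Perron--Frobenius machinery does not apply. To make the argument go through you need an extra assumption, e.g.\ $\min_{i}\omega_{i}\ge 1$, or the sharper step-size bound $\alpha<\min_{i}\omega_{i}\big/\max_{i}\sum_{j\in\mathscr N_{i}}\omega_{j}$. The paper's proof is silent on this point, so the gap is inherited from the source rather than introduced by you, but it is precisely the step you yourself flag as ``the genuine obstacle'', and it does not close as written.
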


\begin{proof}
The first property is based on that ${\bf P_{\textsc{iwac}}} {\bf 1} = {\bf 1} - \alpha{\bm \Delta}^{-1}{\bf L}_{{\textsc{iwac}}} {\bf 1} = {\bf 1}$ and ${\bm\omega}^{T} {\bf P}_{\textsc{iwac}}  = {\bm \omega}^{T} - \alpha {\bm \omega}^{T} {\bm \Delta}^{-1}{\bf L}_{{\textsc{iwac}}}  = {\bm \omega}^{T}$ that implies in a left eigenvector $\bm{\omega}$ and a right eigenvector ${\bf 1}$. 
	
\noindent The second property is guaranteed by the Gershgorin Theorem and the third property is guaranteed by the step size $\alpha$ of the IWAC method.
\end{proof}

\begin{theorem}\label{theo:1}
For the IWAC iterative process, the step size $\alpha$ satisfies the condition $0 < \alpha < {({\rm{max}}_{i} \sum_{j \in \mathscr{N}_{i} }^{} \omega_{j})^{-1}}$, in which the elements $\omega_{i}$ and $\omega_{j}$ operating in a fixed communication network occur infinitely (infinite iterations, fixed values); hence, the iteration converges to
	\begin{equation}
		\lim_{k\rightarrow\infty} x_{i} (k) = \frac{\sum_{i = 1}^{N} \omega_{i} x_{i} (0)}{\sum_{i = 1}^{N} \omega_{i}}.
	\end{equation}
\end{theorem}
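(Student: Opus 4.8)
The plan is to reduce the statement to a direct application of the Perron--Frobenius machinery already assembled above. First I would unroll the recursion \eqref{eq:updating_IWAC}: since ${\bf x}(k+1) = {\bf P}_{\textsc{iwac}}{\bf x}(k)$ with the \emph{fixed} matrix ${\bf P}_{\textsc{iwac}}$ of \eqref{eq:Perron_IWAC} --- the weights $\omega_i,\omega_j$ being held constant over the iterations, as hypothesised --- a trivial induction gives ${\bf x}(k) = {\bf P}_{\textsc{iwac}}^{\,k}\,{\bf x}(0)$ with initial vector $x_i(0)=T_i$. Hence it suffices to identify $\lim_{k\to\infty}{\bf P}_{\textsc{iwac}}^{\,k}$ and to multiply it by ${\bf x}(0)$.

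Next I would invoke the preceding Lemma. Property \textsc{p.1} gives ${\bf P}_{\textsc{iwac}}{\bf 1}={\bf 1}$ and ${\bm\omega}^{T}{\bf P}_{\textsc{iwac}}={\bm\omega}^{T}$, so $1$ is an eigenvalue of ${\bf P}_{\textsc{iwac}}$ with right eigenvector ${\bf 1}$ and left eigenvector ${\bm\omega}$; property \textsc{p.2} (Gershgorin) shows every eigenvalue lies in the closed unit disk, so $1$ is of maximal modulus; and property \textsc{p.3} together with the non-negativity secured by the step-size bound $0<\alpha<({\rm{max}}_{i}\sum_{j\in\mathscr{N}_i}\omega_j)^{-1}$ (which keeps the diagonal entries $1-\tfrac{\alpha}{\omega_i}\sum_{j\in\mathscr{N}_i}\omega_j$ strictly positive and the off-diagonal entries $\tfrac{\alpha\omega_j}{\omega_i}$ non-negative) says ${\bf P}_{\textsc{iwac}}$ is primitive. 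The Perron--Frobenius theorem for primitive non-negative matrices then gives that $1$ is a simple eigenvalue strictly dominating all others in modulus, and
\begin{equation}
\lim_{k\to\infty}{\bf P}_{\textsc{iwac}}^{\,k}=\frac{{\bf v}{\bf u}^{T}}{{\bf v}^{T}{\bf u}},
\end{equation}
with ${\bf v}$ and ${\bf u}$ the right and left Perron eigenvectors. Substituting ${\bf v}={\bf 1}$, ${\bf u}={\bm\omega}$ reproduces exactly \eqref{eq:Prod_Perron_IWAC}, i.e. $\lim_{k\to\infty}{\bf P}_{\textsc{iwac}}^{\,k}={\bf 1}{\bm\omega}^{T}/({\bm\omega}^{T}{\bf 1})$; here I would record that ${\bm\omega}^{T}{\bf 1}=\sum_{i=1}^{N}\omega_i>0$ since all weights are positive, so the normalisation is legitimate.

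Finally I would splice the two steps together: ${\bf x}(k)={\bf P}_{\textsc{iwac}}^{\,k}{\bf x}(0)\to \dfrac{{\bf 1}\,{\bm\omega}^{T}{\bf x}(0)}{{\bm\omega}^{T}{\bf 1}}$, and reading off the $i$-th component --- every row of ${\bf 1}{\bm\omega}^{T}$ equals ${\bm\omega}^{T}$ --- yields $\lim_{k\to\infty}x_i(k)=\dfrac{\sum_{i=1}^{N}\omega_i x_i(0)}{\sum_{i=1}^{N}\omega_i}$, which is the claim. The step I expect to be the genuine obstacle is the one behind \eqref{eq:Prod_Perron_IWAC}: justifying that the matrix powers actually collapse to the rank-one limit. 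Because ${\bm\Delta}^{-1}{\bf L}_{\textsc{iwac}}$ is \emph{not} symmetric, one cannot argue via orthogonal diagonalisation, and the care lies in verifying that all Gershgorin discs sit inside the unit disk, touching the boundary only at $1$, and that no other eigenvalue lies on the unit circle --- precisely the point where primitivity, rather than mere irreducibility, is needed; everything else is routine linear algebra.
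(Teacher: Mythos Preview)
Your proposal is correct and follows essentially the same route as the paper: unroll the recursion to ${\bf x}(k)={\bf P}_{\textsc{iwac}}^{\,k}{\bf x}(0)$, invoke the preceding Lemma to feed ${\bf P}_{\textsc{iwac}}$ into Perron--Frobenius, obtain the rank-one limit ${\bf 1}{\bm\omega}^{T}/({\bm\omega}^{T}{\bf 1})$, and read off the $i$-th component. Your write-up is in fact more explicit than the paper's (which simply cites Perron--Frobenius and displays the resulting chain of equalities), but the logical structure is identical.
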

\begin{proof}
The IWAC consensus method achieves asymptotically the convergence and the Perron-Frobenius Theorem ensures that the limit $\mathop{\lim}\limits_{k\rightarrow\infty} \prod_{\ell = 1}^{k} {\bf P}_{\ell_{\textsc{iwac}}}$ exists for primitive matrices, then
\begin{equation}
 \begin{split}
 {\bf x}(k+1) & = {\bf P}_\textsc{iwac}{\bf x}(k),\\
{\bf x}^{\ast}  = \lim_{k\rightarrow\infty} {\bf x}(k+1)  & = \mathop{\lim}\limits_{k\rightarrow\infty} {\prod}_{\ell = 1}^{k} {\bf P}_{\ell_{\textsc{iwac}}} {\bf x}(0),\\  
	                 {\bf x}^{\ast}	  & = \frac{{\bf 1}{\bm \omega}^{T}}{\bm{\omega}^{T}{\bf 1}} {\bf x}(0),\\ 
\text{where} \quad x_i^*	                 	  &  = \frac{\sum_{i = 1}^{N} \omega_{i} x_{i} (0)}{\sum_{i = 1}^{N} \omega_{i}}.
  \end{split}
\end{equation}
\end{proof}

\subsubsection{Dynamic Networks}
For a network with $N$ SUs, there are a finite number of possible graphs (for example, $r$ graphs). We denote the set of possible graphs $\lbrace \mathscr{G}_{1}, ..., \mathscr{G}_{r}\rbrace$ and there are a correspondent set of Perron matrices $\lbrace {\bf P}_{\textsc{iwac}}^{1}, ..., {\bf P}_{\textsc{iwac}}^{r}\rbrace$. Considering that $1\leq s \leq r$. The {\it weighted average consensus rule} is given by:
\begin{equation}
	{\bf x}(k+1) = {\bf P}_{\textsc{iwac}}^{s(k)}{\bf x}(k).
\end{equation}

The proof for dynamic network follows the fact that the IWAC consensus iteration is a paracontraction\footnote{A paracontraction is a process at where $||{\bf P}_{\textsc{iwac}}{\bf x}|| \leq ||{\bf x}|| \Leftrightarrow {\bf P}_{\textsc{iwac}}{\bf x} \neq {\bf x}$ is guaranteed.} process with fixed points building by the eigenspaces of the Perron matrices. 

For the connected graph $\mathscr{G}(k)$ and the Perron matrix $ \bf P_{\textsc{iwac}}$, being that a nonnegative primitive matrix, having $\bm{\omega}$ and ${\bf 1}$ as the left and right eigenvector respectively. For a paracontracting matrix, we denote the subspace $\mathbb{H({\bf P_{\textsc{iwac}}})}$, that is an eigenspace associated with eigenvalue $1$. The collection of graphs  $\lbrace \mathscr{G}_{1}, ..., \mathscr{G}_{r}\rbrace$ are connected and occur infinitely, the Perron matrices satisfy $\bigcap_{z = 1}^{r} \mathbb{H} ({{{\bf P}^{z}}_{\textsc{iwac}}}) = \text{span{({\bf 1})}}$. From the properties of the paracontracting process, the subspace is fixed, then the iterative process has a limit, that is guaranteed by the Perron-Frobenius Theorem that ensures the  asymptotic convergence.

Hence the following Theorem guarantees the convergence of the IWAC procedure operating under dynamic distributed cooperative spectrum sensing  networks.

\begin{theorem}\label{theo:3} 
For the IWAC iterative process, the step size $\alpha$ satisfying $0 < \alpha < {({\rm{max}}_{i} \sum_{j \in \mathscr{N}_{i} }^{} \omega_{j})^{-1}}$, with weight elements $\omega_{i}$ and $\omega_{j}$ for a dynamic cooperative communication occurring infinitely (infinite iterations), the IWAC rule converges to:
\begin{equation}
  \begin{split}
  x_i^* &= \lim_{k\rightarrow\infty} x_{i} (k)  = \frac{\sum_{i = 1}^{N} \omega_{i} x_{i} (0)}{\sum_{i = 1}^{N} \omega_{i}}\\
\text{or} \quad  {\bf x}^{\ast}	  & = \frac{{\bf 1}{\bm \omega}^{T}}{\bm{\omega}^{T}{\bf 1}} {\bf x}(0).
  \end{split}
\end{equation}
\end{theorem}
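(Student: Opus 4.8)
The plan is to reduce Theorem~\ref{theo:3} to the fixed-network analysis already carried out in Theorem~\ref{theo:1} and Lemma~1, upgraded to an infinite product of \emph{switched} Perron matrices. First I would fix notation: let $s:\mathbb{N}\to\{1,\dots,r\}$ be the switching signal that selects, at step $k$, the graph $\mathscr{G}_{s(k)}$ from the finite collection $\{\mathscr{G}_{1},\dots,\mathscr{G}_{r}\}$, so that ${\bf x}(k+1)={\bf P}_{\textsc{iwac}}^{s(k)}{\bf x}(k)$ and hence ${\bf x}(k+1)=\big(\prod_{\ell=0}^{k}{\bf P}_{\textsc{iwac}}^{s(\ell)}\big){\bf x}(0)$. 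The structural point I would stress is that the weight matrix ${\bm \Delta}=\mathrm{diag}(\omega_{1},\dots,\omega_{N})$ encodes only per-node channel conditions and is therefore \emph{the same} for every $z$; consequently Lemma~1 applies verbatim to each ${\bf P}_{\textsc{iwac}}^{z}={\bf I}-\alpha{\bm \Delta}^{-1}{\bf L}_{\textsc{iwac}}^{z}$, so each one is nonnegative, shares the common right eigenvector ${\bf 1}$ and the common left eigenvector ${\bm \omega}$ for the eigenvalue $1$, and has all its eigenvalues in the closed unit disk. In particular the functional ${\bf x}\mapsto{\bm \omega}^{T}{\bf x}$ is invariant along the whole switched trajectory: ${\bm \omega}^{T}{\bf x}(k)={\bm \omega}^{T}{\bf x}(0)$ for every $k$.

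Next I would carry out the paracontraction argument sketched before the statement. Equipping $\mathbb{R}^{N}$ with a single ${\bm \Delta}$-weighted norm $\|\cdot\|_{\star}$, I would show that under $0<\alpha<({\rm{max}}_{i}\sum_{j\in\mathscr{N}_{i}}\omega_{j})^{-1}$ each ${\bf P}_{\textsc{iwac}}^{z}$ is paracontracting, i.e.\ $\|{\bf P}_{\textsc{iwac}}^{z}{\bf x}\|_{\star}\le\|{\bf x}\|_{\star}$ with equality only when ${\bf P}_{\textsc{iwac}}^{z}{\bf x}={\bf x}$, and that its fixed set $\mathbb{H}({\bf P}_{\textsc{iwac}}^{z})=\ker {\bf L}_{\textsc{iwac}}^{z}$ equals $\mathrm{span}({\bf 1})$ because $\mathscr{G}_{z}$ is connected. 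Since every graph of the collection is connected and each occurs infinitely often, $\bigcap_{z=1}^{r}\mathbb{H}({\bf P}_{\textsc{iwac}}^{z})=\mathrm{span}({\bf 1})$, so the theorem on infinite products of paracontracting maps (the one already invoked in the preceding discussion, cf.\ \cite{Mitola15,Horn85}) guarantees that $\prod_{\ell=0}^{k}{\bf P}_{\textsc{iwac}}^{s(\ell)}$ converges, as $k\to\infty$, to a matrix ${\bf P}_{\infty_{\textsc{iwac}}}$ whose range is $\mathrm{span}({\bf 1})$. It then remains to identify this limit: ${\bf P}_{\infty_{\textsc{iwac}}}{\bf 1}={\bf 1}$ (every factor fixes ${\bf 1}$) and, by the invariance above, ${\bm \omega}^{T}{\bf P}_{\infty_{\textsc{iwac}}}={\bm \omega}^{T}$; a rank-one matrix with column space $\mathrm{span}({\bf 1})$ satisfying both properties is necessarily $\tfrac{{\bf 1}{\bm \omega}^{T}}{{\bm \omega}^{T}{\bf 1}}$, recovering \eqref{eq:Prod_Perron_IWAC}. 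Hence ${\bf x}^{\ast}={\bf P}_{\infty_{\textsc{iwac}}}{\bf x}(0)=\tfrac{{\bf 1}{\bm \omega}^{T}{\bf x}(0)}{{\bm \omega}^{T}{\bf 1}}$, i.e.\ $x_{i}^{\ast}=\big(\sum_{i=1}^{N}\omega_{i}x_{i}(0)\big)/\big(\sum_{i=1}^{N}\omega_{i}\big)$, which is exactly the claim in both of its stated forms.

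I expect the genuine obstacle to be the paracontraction step for the \emph{non-symmetric} matrices ${\bf P}_{\textsc{iwac}}^{z}$. Unlike the plain AC rule, ${\bm \Delta}^{-1}{\bf L}_{\textsc{iwac}}^{z}$ is not a symmetric weighted graph Laplacian, so the ordinary Euclidean norm will not do; one has to choose the weighted norm $\|\cdot\|_{\star}$ so that the first-order decrement dominates the second-order term, which is precisely where the bound $\alpha<({\rm{max}}_{i}\sum_{j\in\mathscr{N}_{i}}\omega_{j})^{-1}$ is consumed (and where one also checks that the ${\bf P}_{\textsc{iwac}}^{z}$ stay nonnegative and primitive), and one must pin the equality case down to $\ker {\bf L}_{\textsc{iwac}}^{z}$. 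Once a \emph{common} such norm is secured, everything else --- joint connectivity forcing the intersection of the fixed subspaces to be $\mathrm{span}({\bf 1})$, convergence of the switched product, and identification of the limit --- parallels the fixed-network proof of Theorem~\ref{theo:1} and is essentially bookkeeping.
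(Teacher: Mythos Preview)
Your proposal follows essentially the same approach as the paper: the paper's own proof of Theorem~\ref{theo:3} is literally ``The proof is similar to the fixed network case, given that the Perron--Frobenius applies. Hence, the proof is omitted,'' and the paracontraction reasoning you flesh out is precisely the sketch the paper places in the paragraph immediately preceding the theorem statement. You supply considerably more detail than the paper---in particular on securing a common ${\bm\Delta}$-weighted norm for the non-symmetric switched matrices ${\bf P}_{\textsc{iwac}}^{z}$ and on identifying the rank-one limit via the shared left/right eigenvectors---but the route (Lemma~1 applied to each ${\bf P}_{\textsc{iwac}}^{z}$, paracontraction forcing convergence into $\bigcap_{z}\mathbb{H}({\bf P}_{\textsc{iwac}}^{z})=\mathrm{span}({\bf 1})$, then Perron--Frobenius-type identification of the limit as ${\bf 1}{\bm\omega}^{T}/{\bm\omega}^{T}{\bf 1}$) is identical.
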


\begin{proof}
The proof is similar to the fixed network case, given that the Perron-Frobenius applies. Hence, the proof is omitted.\\
\end{proof}

Should be observed that the convergence of the fixed and dynamic communications, results in the same final result. Numerical evidence corroborating this fact is presented in section \ref{Sec_Results}.

\section{Numerical Results}\label{sec:results} \label{Sec_Results}
In this section, we have compared the performance of various spectrum sensors discussed in this work. We have considered four scenarios, all of them with one primary user, PU$=1$. In Scenario A, the network is fixed, {\it i.e.,} the SUs are considered static in the same position during all DCSS process. The channel is considered only under AWGN noise effect, where the SUs SNRs are contained in a range of $[-10, 0]$ dB. The Monte-Carlo simulations (MCS) have been realised considering a network with $6$ and $10$ SUs. In the Scenario B, we consider $10$ and $20$ SUs in the network in a AWGN channel with SNRs between $[-10, 0]$ dB. Now the scenario is dynamic, {\it i.e.,} the SUs has mobility in the network. In the Scenario C, the channel is Rayleigh with SNR $\in [-2, 5]$ dB. Furthermore, the SUs are fixed and the simulations consider $6$ and $10$ SUs. Finally, in the Scenario D, the network is dynamic under Rayleigh channel and SNRs values between $[-2, 5]$ dB; $10$ and $20$ SUs have been considered in the simulations. In Rayleigh channels, we have considered the weights $\omega_i$ as a perfect estimation of the average SNRs in each node. The main system parameters for the Scenarios A to D are summarised in Table \ref{tab:scen}.
 
\begin{table}[!htbp]
\centering
\caption{System Scenarios, considering PU $=1$ user}
\begin{tabular}{ll}
	\hline
	\bf Parameter & \bf Adopted Values \\
	\hline
	\multicolumn{2}{c}{Scenario \bf A}\\
	\hline
	Channel & AWGN\\
	Network Type & Fixed, ${\rm Pr}_{\text{fail}} = 0$ \\	
	Secondary users & SU $\in \{6, 10\}$ users\\
	Range of SNR & SNR$_{\textsc{su}}  \in  \lbrace 0, \,\, -10\rbrace$ [dB]\\
	\hline
	\multicolumn{2}{c}{Scenario \bf B}\\
	\hline
	Channel & AWGN\\
	Network Type & Dynamic, ${\rm Pr}_{\text{fail}} = 0.4$ \\	
	Secondary users & SU $\in \{10, 20\}$ users\\
	Range of SNR & SNR$_{\textsc{su}}  \in  \lbrace 0, \,\, -10\rbrace$ [dB]\\
	\hline
	\multicolumn{2}{c}{Scenario \bf C}\\
	\hline
	Channel & flat Rayleigh\\
	Network Type & Fixed, ${\rm Pr}_{\text{fail}} = 0$ \\	
	Secondary users & SU $\in \{6, 10\}$ users\\
	Range of SNR & SNR$_{\textsc{su}}  \in  \lbrace -2, \,\, 5\rbrace$ [dB]\\
	\hline
	\multicolumn{2}{c}{Scenario \bf D}\\
	\hline
	Channel & flat Rayleigh\\
	Network Type & Dynamic, ${\rm Pr}_{\text{fail}} = 0.4$\\ 
	Secondary users & SU $\in \{10, 20\}$ users\\
	Range of SNR & SNR$_{\textsc{su}}  \in  \lbrace -2, \,\, 5\rbrace$ [dB]\\
	\hline
\end{tabular}

\label{tab:scen}
\end{table}
 Table \ref{tab_ref_values1} depicts the main adopted simulation parameters values. These values are adopted by all scenarios. For each MCS $5000$ realisations have been considered, with $12$ samples per decision and a fail probability communication between SUs in the dynamic channel as $\text{Pr}_{\rm fail}=0.4$.
 
\begin{table}[!htbp]
	\centering
	\caption{Reference values used in Simulations.}
	\begin{tabular}{lc}
		\hline
		\bf Parameter & \bf Adopted Value \\
		\hline
		Samples & $N_{s}=12$ \\
		$\text{MCS Trials}$ & $5000$ \\
		$\text{SUs}$ & SU $\in \{6, 10, 20\}$ \\
		$\text{PUs}$ & $1$\\
		${\rm Pr}_{\text{fail}}$ & $0.4$\\
	    SNR Range & SNR$_{\textsc{su}} \in \lbrace -10,\, 5\rbrace$ [dB] \\
		$\text{Channels}$ & $\text{AWGN, Rayleigh}$ \\
		$\text{Network}$ & $\text{Fixed, Dynamic}$ \\
		\hline
	\end{tabular}
	\label{tab_ref_values1}
\end{table}
 
\subsection{Network Topology} \label{NT}

In this work, we consider three different topologies to the cognitive network. The distributed network topology is based on graph theory. The application of graph theory in network context for consensus spectrum sensing purpose has been described in the section \ref{GT}.

\subsubsection{Topology I - 6 SUs} \label{NT6}

This topology is based on \cite{Kailkhura17} and depicted previously in Fig. \ref{ex6}. The $6$ SUs cooperate each other until the consensus convergence. \ref{ex6} The associated adjacency matrix is defined in Eq.\eqref{matrix6b}.

\subsubsection{Topology II - 10 SUs}

This topology is based on \cite{Li10}, \cite{Mitola15} and \cite{Mitola11}. The $10$ SUs cooperate each other until the consensus convergence. The Fig. \ref{ex10} shows the network topology. 
\begin{figure}[!htbp]
	\centering
    \includegraphics[width=.48\textwidth]{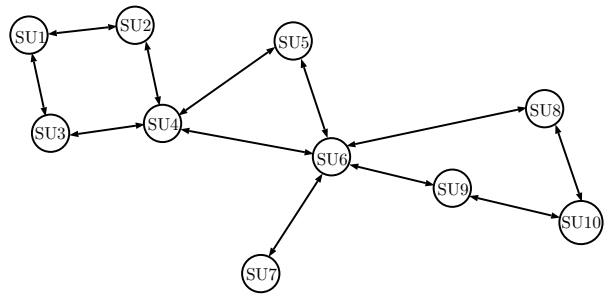}
	\caption{Decentralised Cooperative Scheme with $10$ SUs \cite{Mitola15}.}
	\label{ex10}
\end{figure}

\noindent As a consequence, the adjacent matrix in equation (\ref{matrix10}) defines the network topology represented by the graph of Fig. \ref{ex10}.

\begin{equation} 
{\bf G_{10}} = 
\small
\begin{bmatrix}
    0 & 1 & 1 & 0 & 0 & 0 & 0 & 0 & 0 & 0 \\
    1 & 0 & 0 & 1 & 0 & 0 & 0 & 0 & 0 & 0 \\
    1 & 0 & 0 & 1 & 0 & 0 & 0 & 0 & 0 & 0 \\
    0 & 1 & 1 & 0 & 1 & 1 & 0 & 0 & 0 & 0 \\
    0 & 0 & 0 & 1 & 0 & 1 & 0 & 0 & 0 & 0 \\
    0 & 0 & 0 & 1 & 1 & 0 & 1 & 1 & 1 & 0 \\
    0 & 0 & 0 & 0 & 0 & 1 & 0 & 0 & 0 & 0 \\
    0 & 0 & 0 & 0 & 0 & 1 & 0 & 0 & 0 & 1 \\
    0 & 0 & 0 & 0 & 0 & 1 & 0 & 0 & 0 & 1 \\
    0 & 0 & 0 & 0 & 0 & 0 & 0 & 1 & 1 & 0
\end{bmatrix}
\label{matrix10}
\end{equation}

\subsubsection{Topology III -- 20 SUs}
We create a new topology to characterise the performance of the DCSS methods in larger networks. The $20$ SUs cooperate each other until the cooperative SS consensus achieves convergence. Fig. \ref{ex20} depicts the graph for the network topology and the adjacent matrix ${\bf G}_{20}$ is straightforwardly defined in a similar way of the ${\bf G}_{10}$ in the {\it Topology II}. 

\begin{figure}[!htbp]
	\centering
    \includegraphics[width=.49\textwidth]{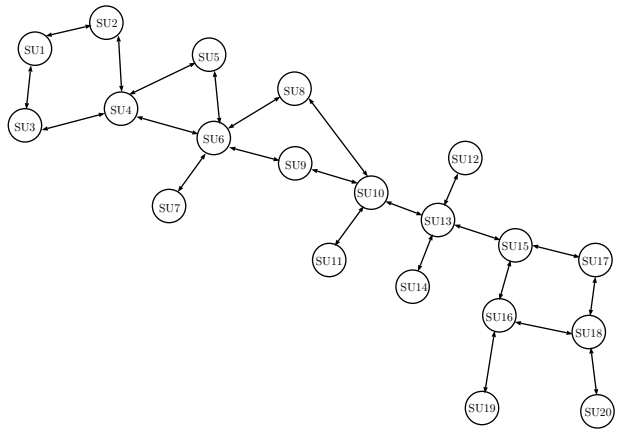}
	\caption{Decentralised cooperative scheme with $20$ SUs.}
	\label{ex20}
\end{figure}

\subsection{Parameters Values and Scenarios}
The two main parameters analysed in this work are the numerical {\it cooperative consensus convergence} (3C) and {\it receiver operator characteristics} (ROC). The goal of the numerical convergence analysis is determine and compare the number of iterations needed for the each the consensus SS technique achieves practical convergence. The parameter considered herein is the level of energy of each energy detector in dB. The cooperative consensus convergence is given when the energy difference $\Delta E$ among all the SUs output energy detected is $\Delta E\leq 1$ dB. The ROC analysis is the main figure-of-merit of analysis in the SS methods. The ROC is the relation of th probability of detection against the probability of false alarm. 

\subsection{Convergence}\label{sec:converg}
In this subsection we consider the numerical convergence as a figure-of-merit for analysis of the four consensus-based distributed spectrum sensing methods. The consensus methods are numerically compared considering the {different scenarios aiming at demonstrating the effectiveness} of the spectrum sensing methods. The results regarding the number of iterations for convergence is synthesised in Table \ref{tab:converg}. 
\begin{table}[!htbp]
\caption{Number of iterations for the DCSS method achieves convergence under $\Delta E \leq 1\,$ {[dB]}.}
\begin{center}
\begin{tabular}{cc|cccc}
	\hline
	\bf Scenario & \bf \#SUs &\bf AC& \bf WAC& \bf WAC-AE & \bf IWAC\\
	\hline\hline
	{\bf A}-AWGN & $6$  &  $4$  & $15$ & $5$ & $15$\\
   		(Fixed)           & $10$ & $4$ & $6$ & $9$ & $10 $ \\
	\hline
	{\bf B}-AWGN & $10$ & $4$ & $6$ & $9$ & $10$ \\
  		(Mobile)         & $20$ & $22$ & $25$ & $30$ & $31$ \\
	\hline \hline
	{\bf C}-Rayleigh & $6$ & $15$ & $19$ & $35$ & $34$ \\
         (Fixed)       & $10$ & $19$ & $11$ & $18$ & $27$ \\
	\hline
	{\bf D}-Rayleigh & $10$ & $19$ & $11$ & $18$ & $27$ \\
     	(Mobile)        & $20$ & $42 $ & $48$ & $>50$ & $>50$ \\
	\hline
\end{tabular}
\end{center}
\label{tab:converg}
\end{table}

For scenario A, the network with $10$ SUs needs less average number of iterations to reach the convergence criterion $\Delta E \leq 1$ [dB], compared to the network with $6$ SUs, due to the higher availability of connections among the SU neighbours.  On the average, the AC method needs less number of iterations than the WAC, WAC-AE and IWAC methods to achieve convergence in almost all scenarios, including AWGN $\times$ Rayleigh, fixed $\times$ mobile channels, and a low-medium $\times$ a high number of cooperative SUs.

{In most cases}, the IWAC method requires a higher number of iterations to achieve $\Delta E$-based convergence, while the WAC-AE method operating under dynamic/mobile channels needs approximately the same number of iterations compared to the IWAC method, but yet higher than AC and WAC methods. Moreover, as expected, in the Rayleigh channel scenarios, all methods require a higher number of iteration to achieve convergence due to the channel characteristics. Notice that in the analysed numerical simulations, we have averaged on $500$ channel realisations: the Rayleigh channel coefficients, as well as SU localisation (reflecting different SNRs$_{\textsc{su}}$) have been taken randomly and deployed to characterise the SS detectors' convergence.

Fig. \ref{fig:converg3} depicts convergence behaviour for the four AC detectors in the case of $10$ SUs operating under dynamic AWGN channels, while Fig. \ref{fig:converg5} reveals the convergence trend for the case of $10$ cooperative SUs in a fixed network under Rayleigh channels.

\begin{figure}[!htbp]
\centering
\small
\includegraphics[width=.495\textwidth]{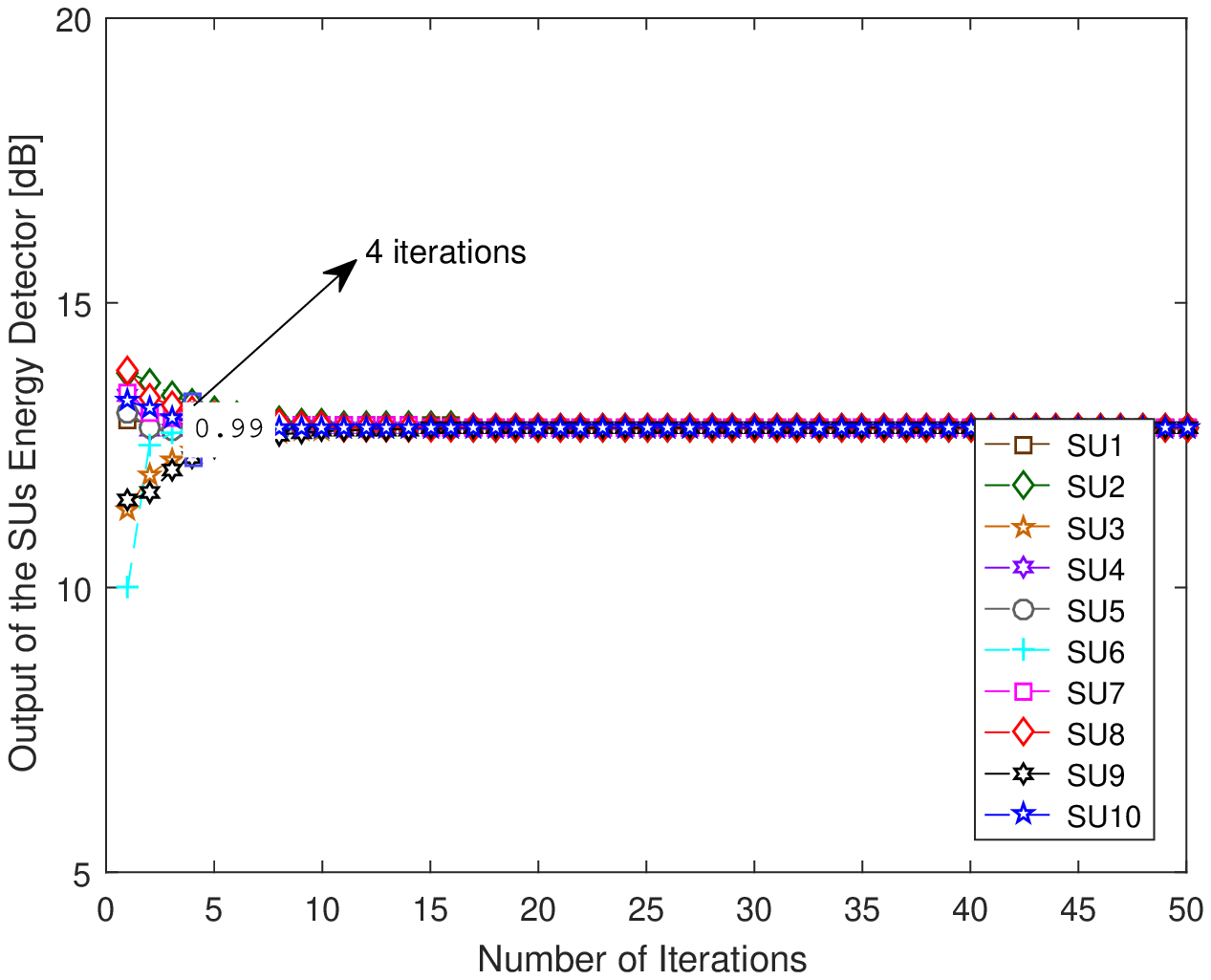}
\includegraphics[width=.495\textwidth]{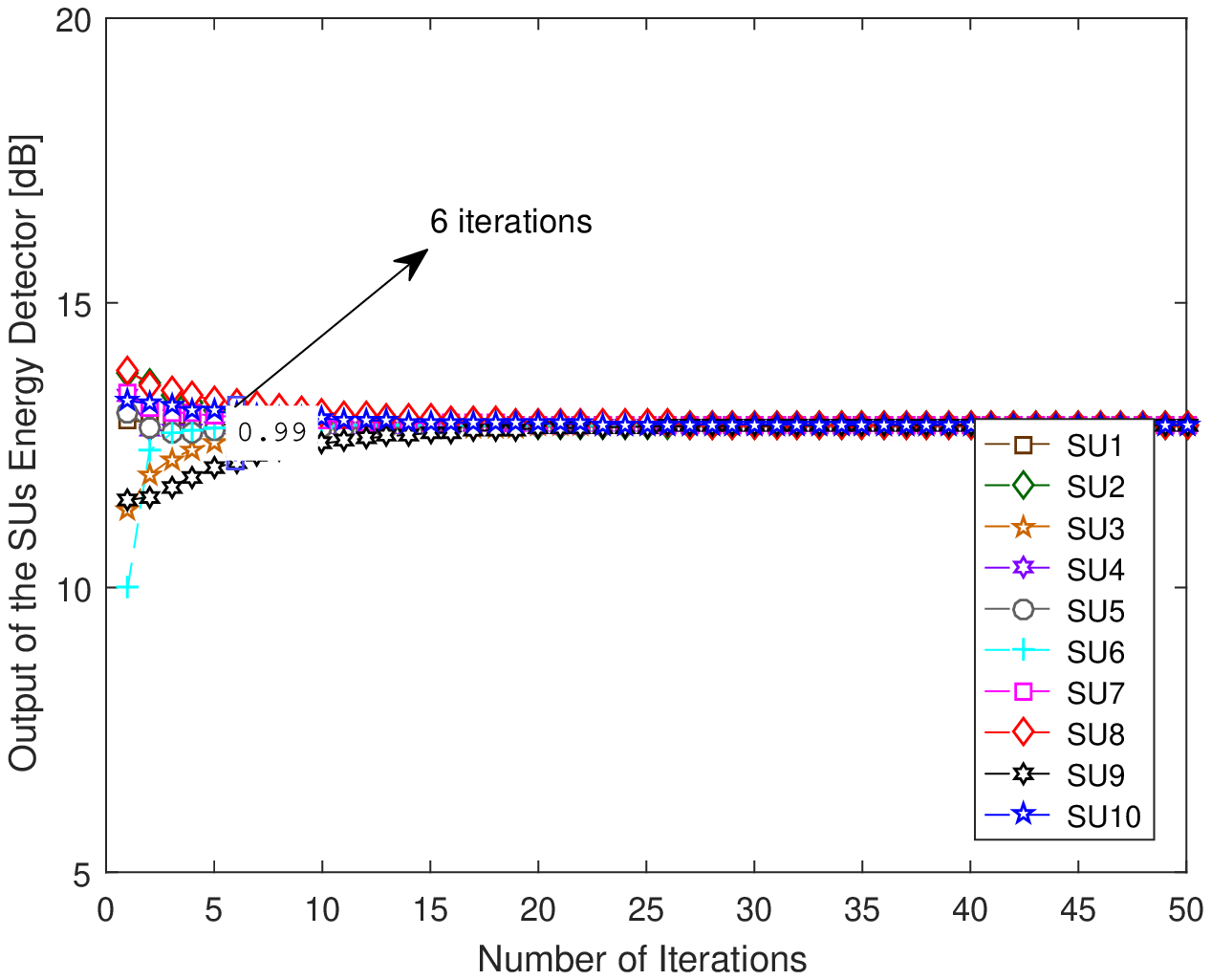}\\
    a) AC \hspace{7cm}  b) WAC\\
\includegraphics[width=.495\textwidth]{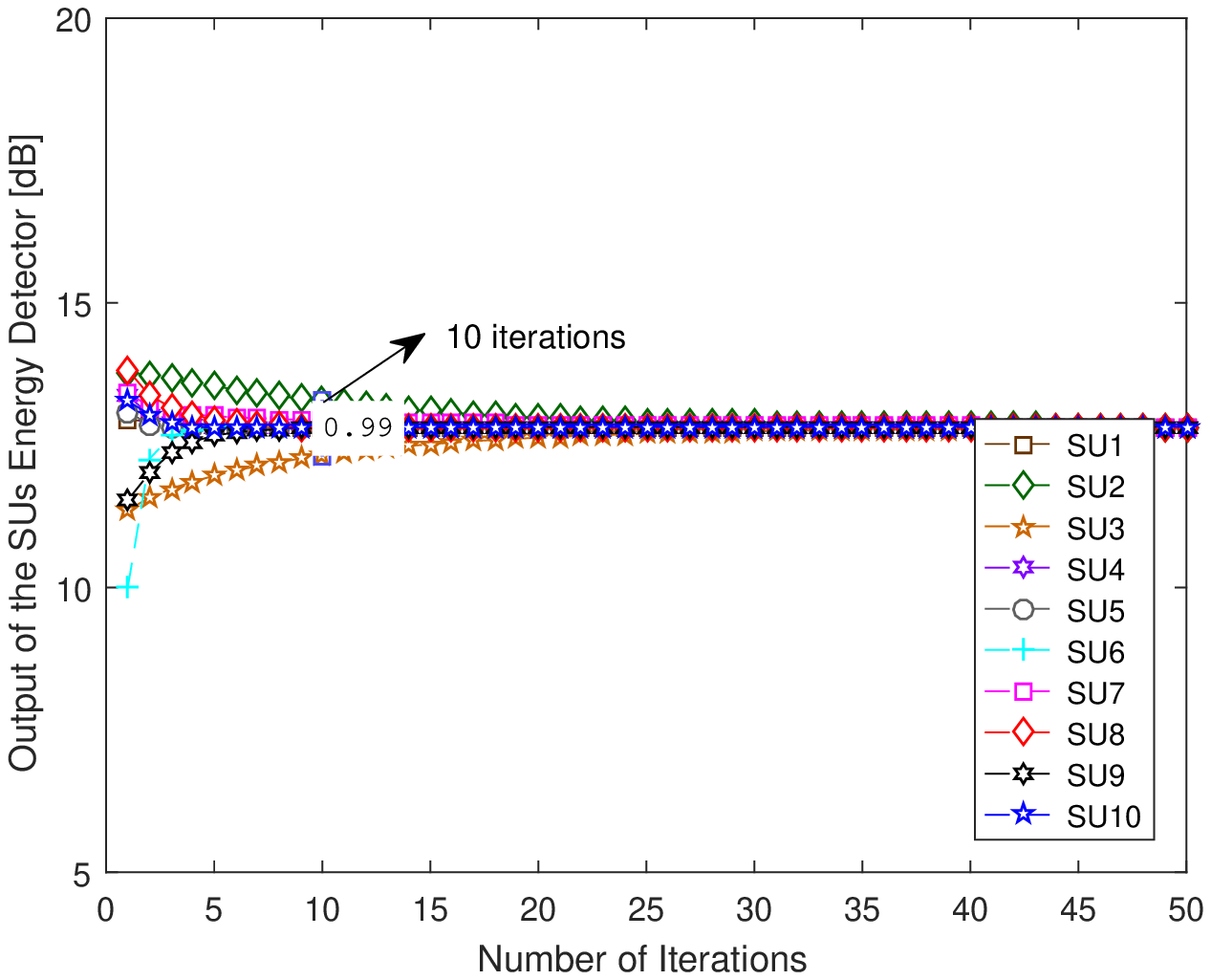}
 \includegraphics[width=.495\textwidth]{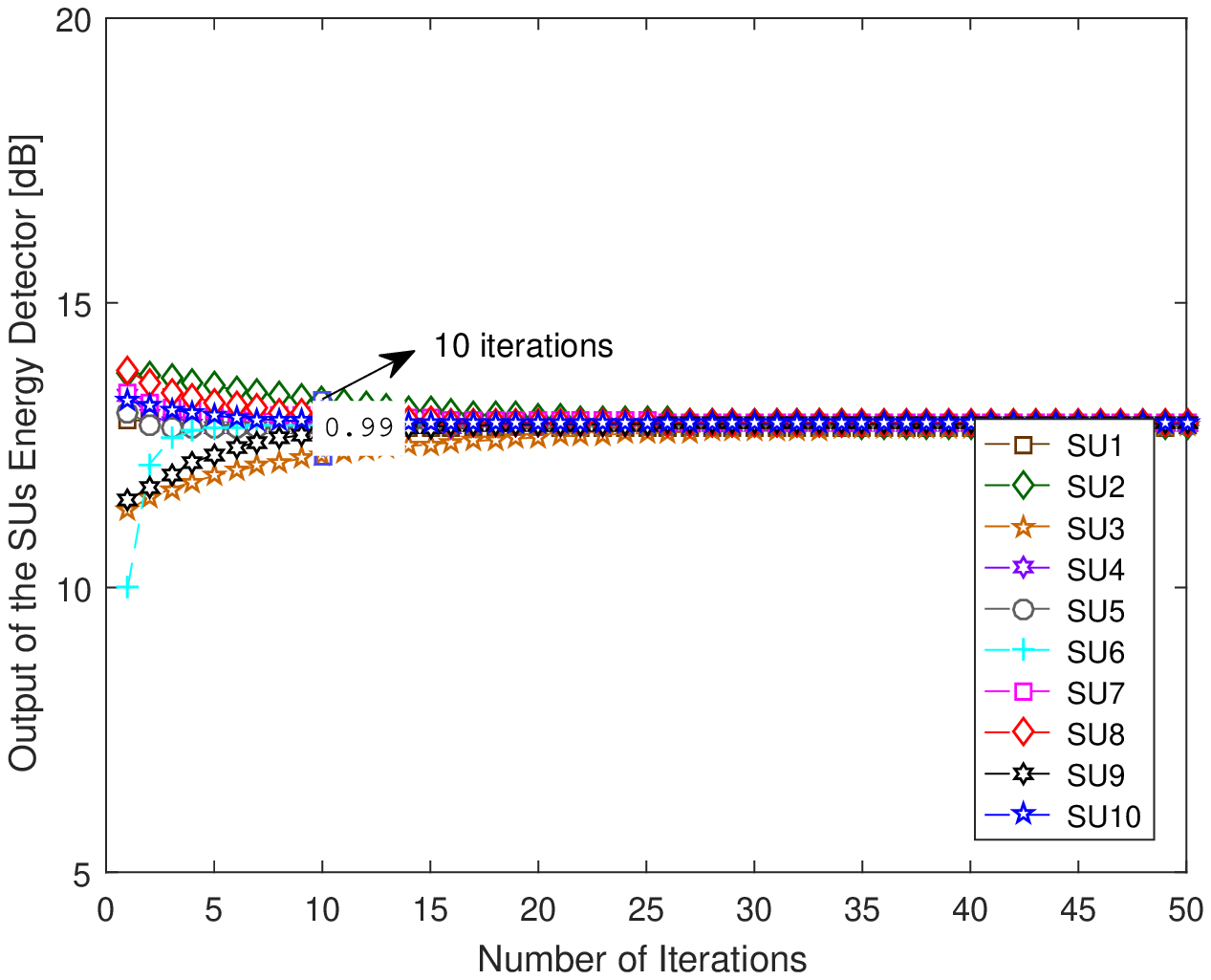}
	c) WAC-AE \hspace{7cm} d) proposed IWAC
\caption{Convergence for the different DCSS under AC rules considering $10$ SUs and Dynamic AWGN Channel.}
\label{fig:converg3}
\end{figure}

\begin{figure}[!htbp]
\centering
\small
\includegraphics[width=.495\textwidth]{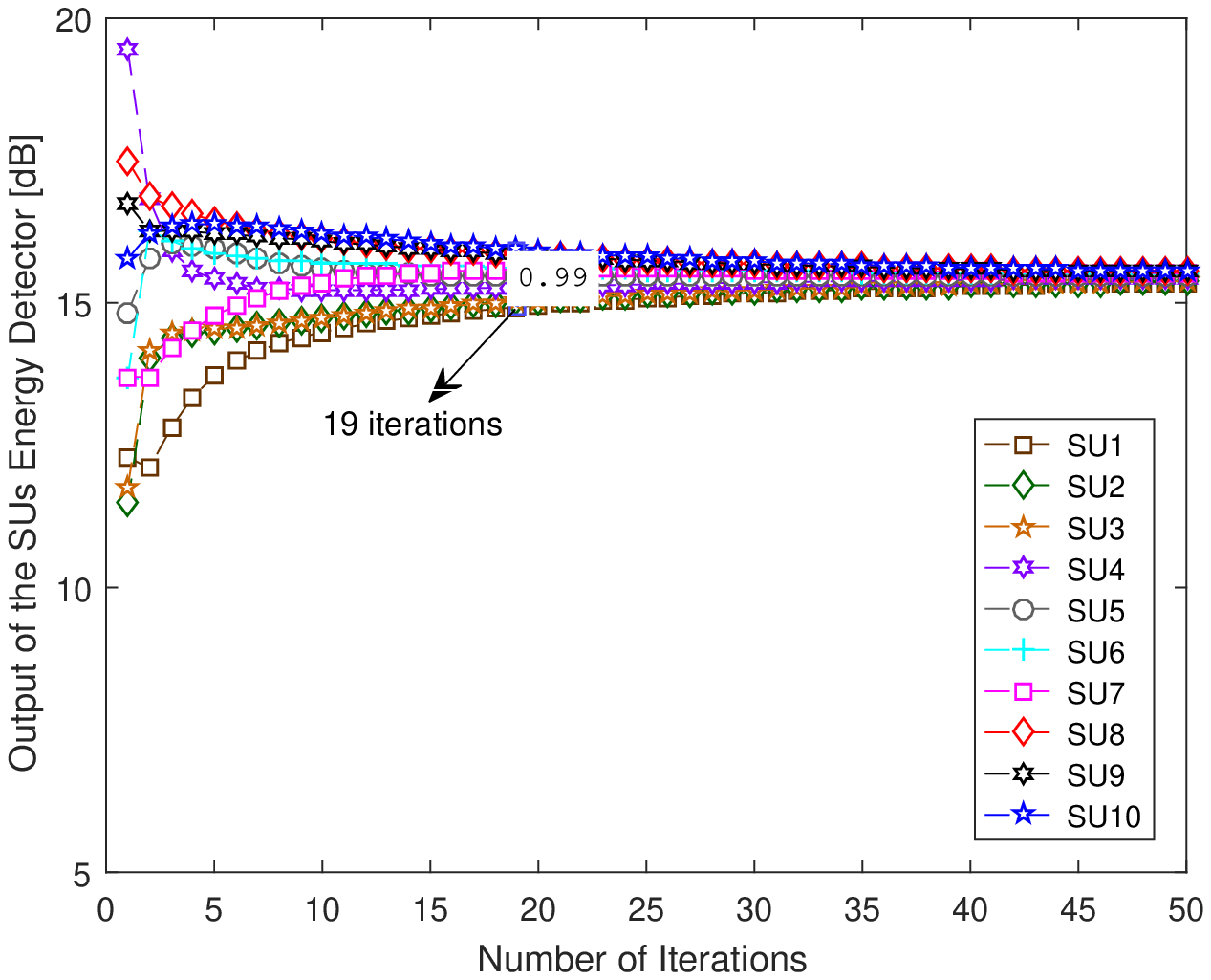}
\includegraphics[width=.495\textwidth]{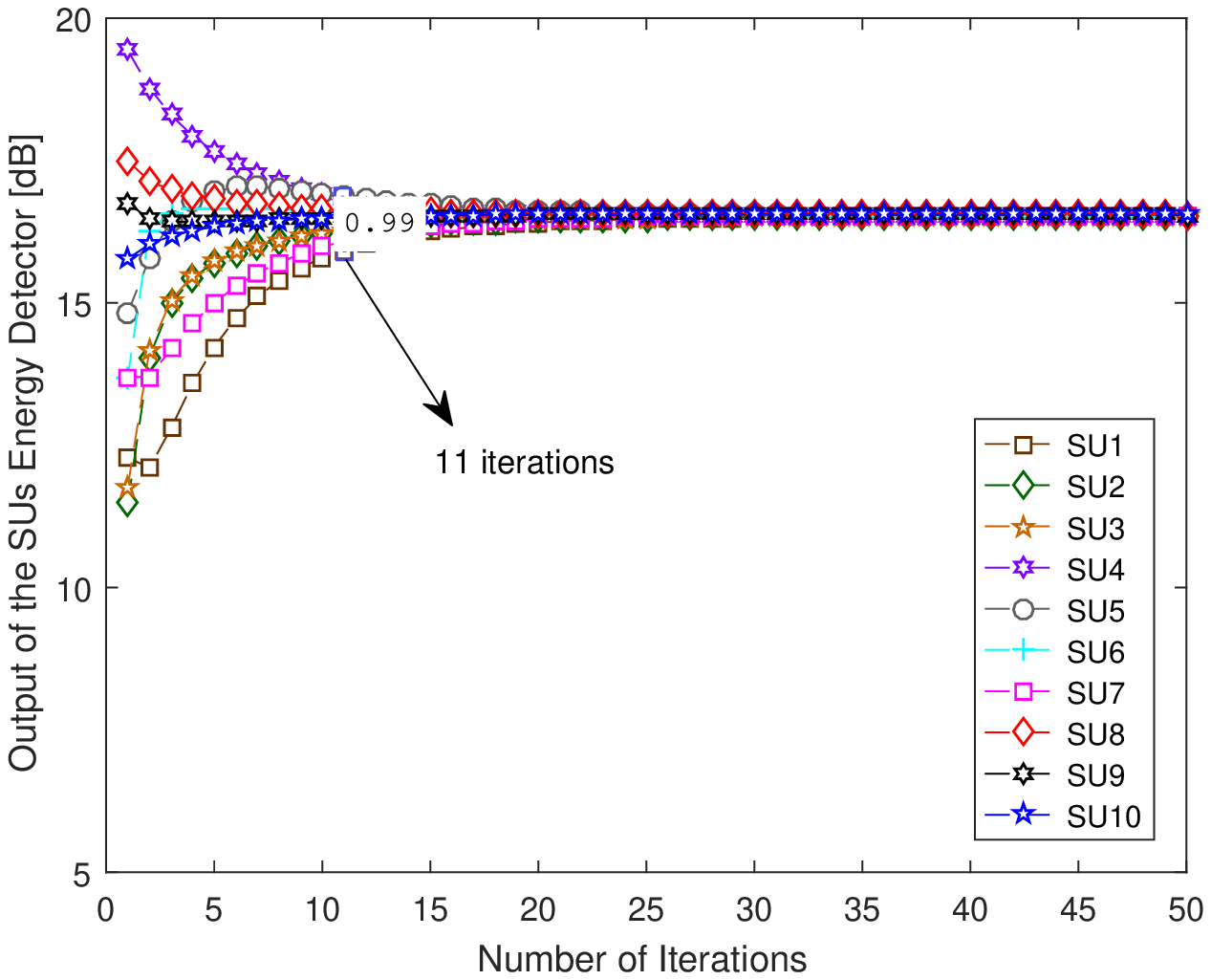}\\
    a) AC \hspace{7cm}  b) WAC\\
\includegraphics[width=.495\textwidth]{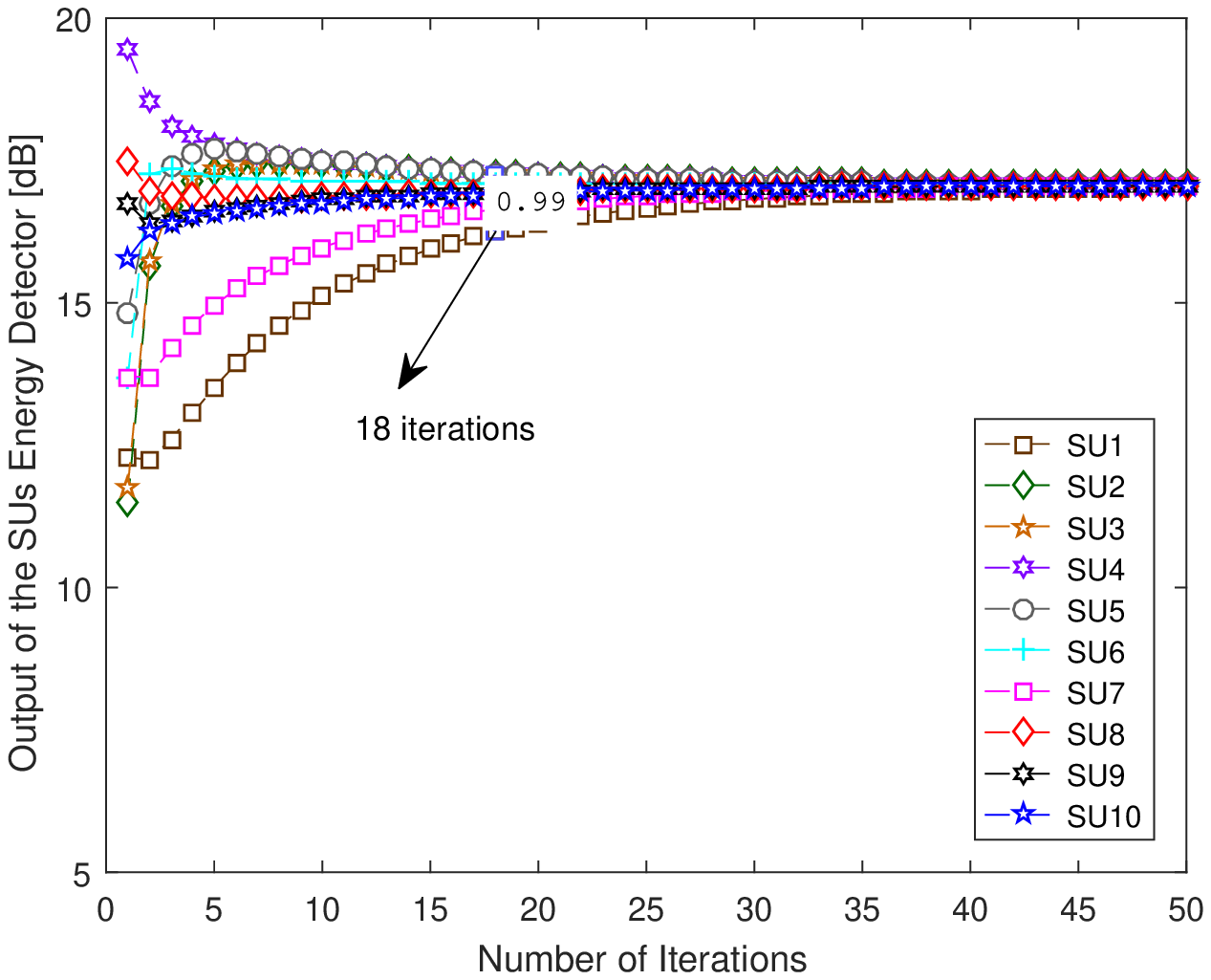}
 \includegraphics[width=.495\textwidth]{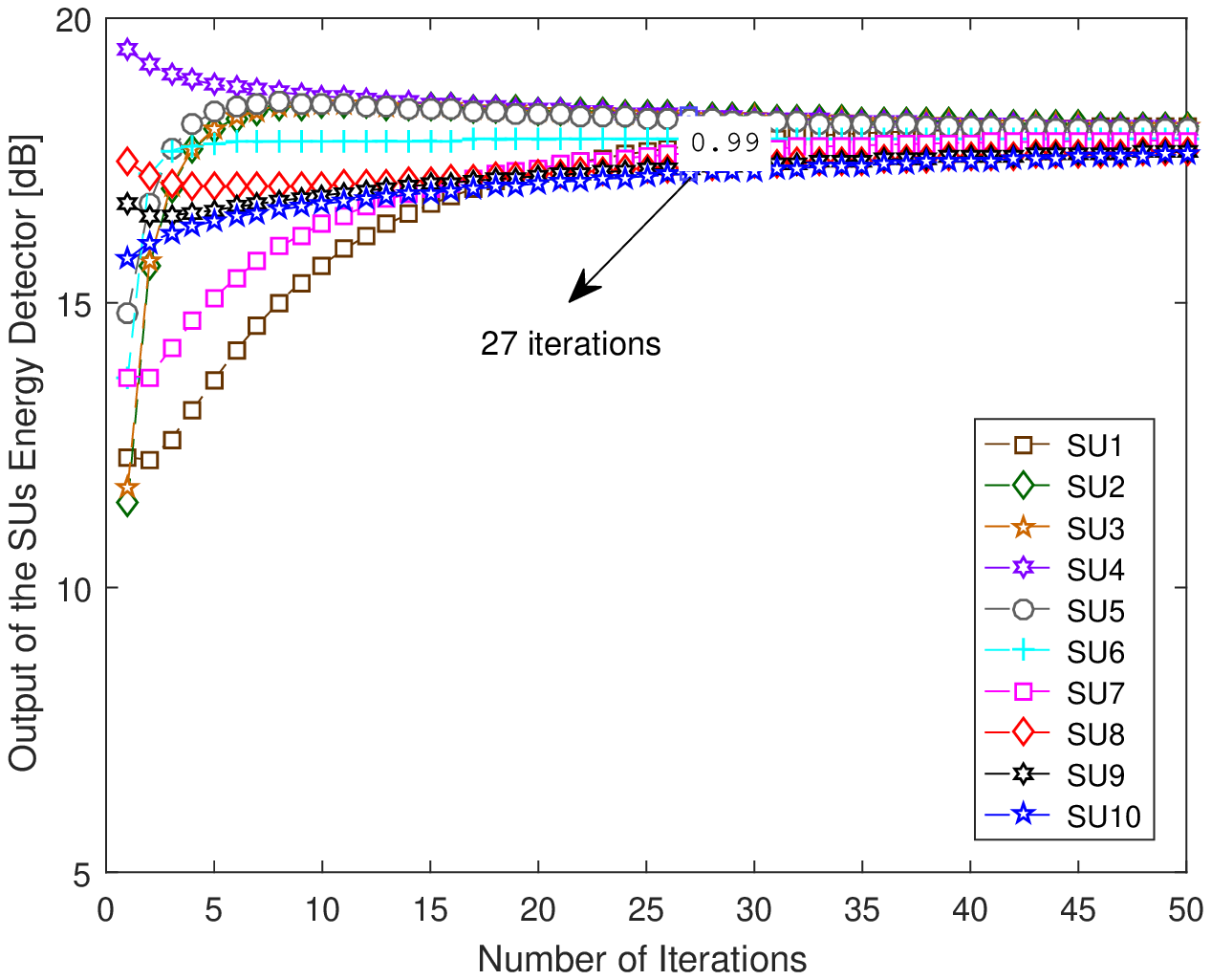}
	c) WAC-AE \hspace{7cm} d) proposed IWAC
\caption{Convergence for the different DCSS rules considering $10$ SUs under Fixed Rayleigh channel.}
\label{fig:converg5}
\end{figure}

\subsection{ROC}
 The global ROC  for the various spectrum {sensing methods is} numerically compared considering {different scenarios (A, B, C and D) aiming at demonstrating the effectiveness} of the proposed cooperative IWAC method under both AWGN and NLOS-Rayleigh channels. Indeed, Fig. \ref{fig:GlobalROC} depicts the ROC for several classical as well the proposed IWAC and WAC-AE DCSS methods, considering $6$, $10$ and $20$ SUs, AWGN Channel, Fixed and Dynamic Networks.

 For the $6$ SUs the WAC, as well as the proposed WAC-AE method have similar performance and can be compared to the MRC rule, which represents the optimum centralised SS performance. The proposed IWAC method presents a slight degradation compared to the WAC and WAC-AE methods, but keeps better performance compared to the AC method, which has similar performance to the EGC rule. On the other hand, the classical hard combining rules result in poor performance compared to the soft combining rule. Among all classical rules,  the OR rule has the best performance while the AND rule presents the worse performance. A similar conclusion can be obtained for $10$ and $20$ SUs (see Fig.  \ref{fig:GlobalROC}.b, \ref{fig:GlobalROC}.c and \ref{fig:GlobalROC}.d). Moreover, the mobility of network does not affect substantially the ROC performance of all spectrum sensing techniques operating under AWGN channels.

The ROC behaviour for the nine spectrum sensing rules operating under Rayleigh channels and  $6$, $10$ and $20$ fixed and dynamic SUs is depicted in Fig. \ref{fig:GlobalROC2}. Again, for $6$ SUs the IWAC, WAC-AE and WAC methods demonstrate similar performance when compared to the optimum performance (MRC rule). The AC method has similar performance to the EGC rule, and for this scenario, it results in a similar performance of the MRC and WAC methods. Interesting, one can conclude that in severe Rayleigh fading channels scenarios the OR rule results in suitable performance while the AND rule performances worse. Similar conclusion can be obtained for a different number of cooperative SUs. Finally, the mobility of network does not affect the ROC performance substantially. Note that the suitable ROC performance achieved for all rules, except AND rule, under Rayleigh channels could be attained due to a higher range of   SNR$_{\textsc{su}}  \in  \lbrace -2, \,\, 5\rbrace$ [dB] when compared with the SNR range adopted in AWGN scenarios.

\begin{figure}[!htbp]
\centering
\small
\includegraphics[width=.495\textwidth]{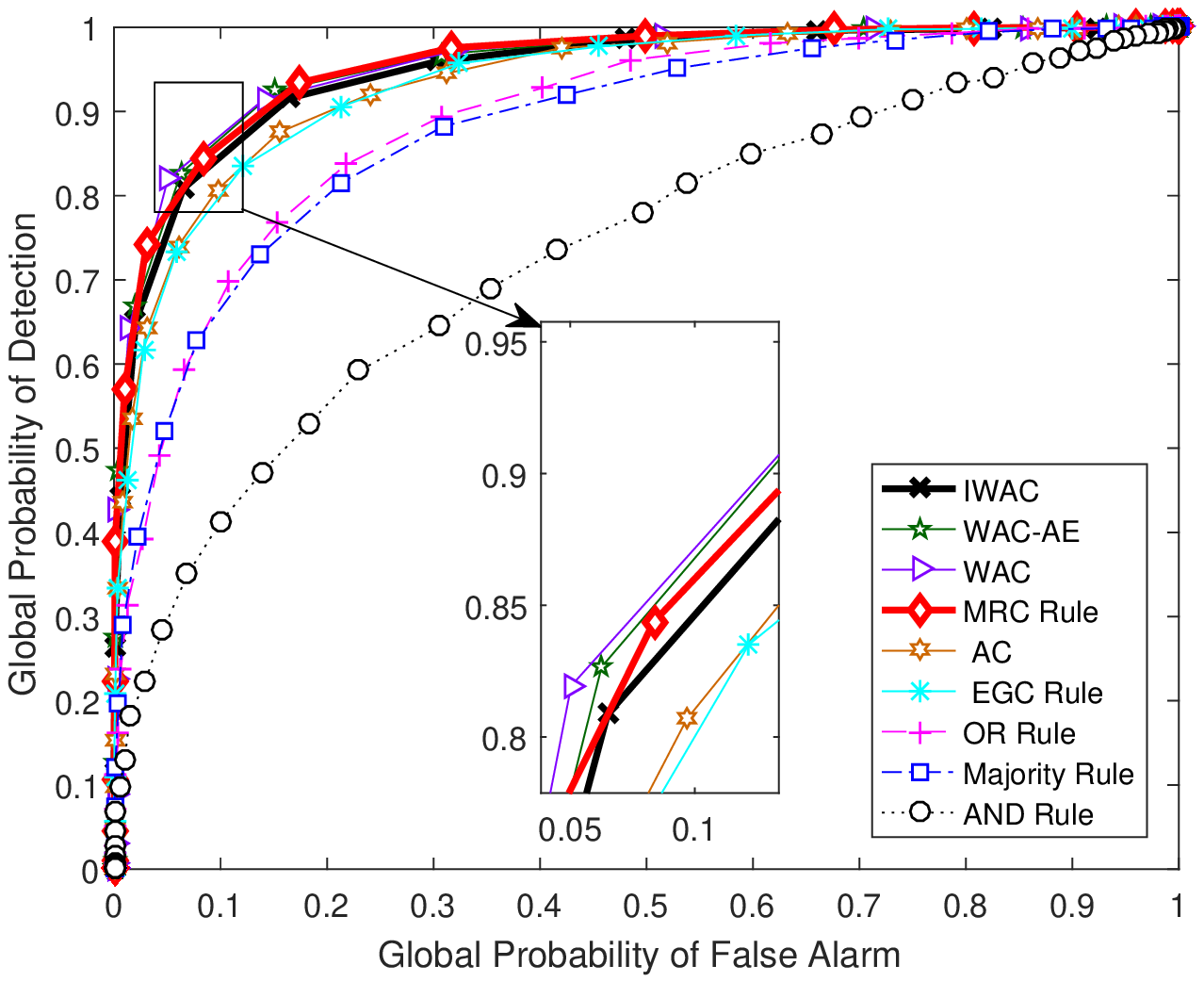}
\includegraphics[width=.495\textwidth]{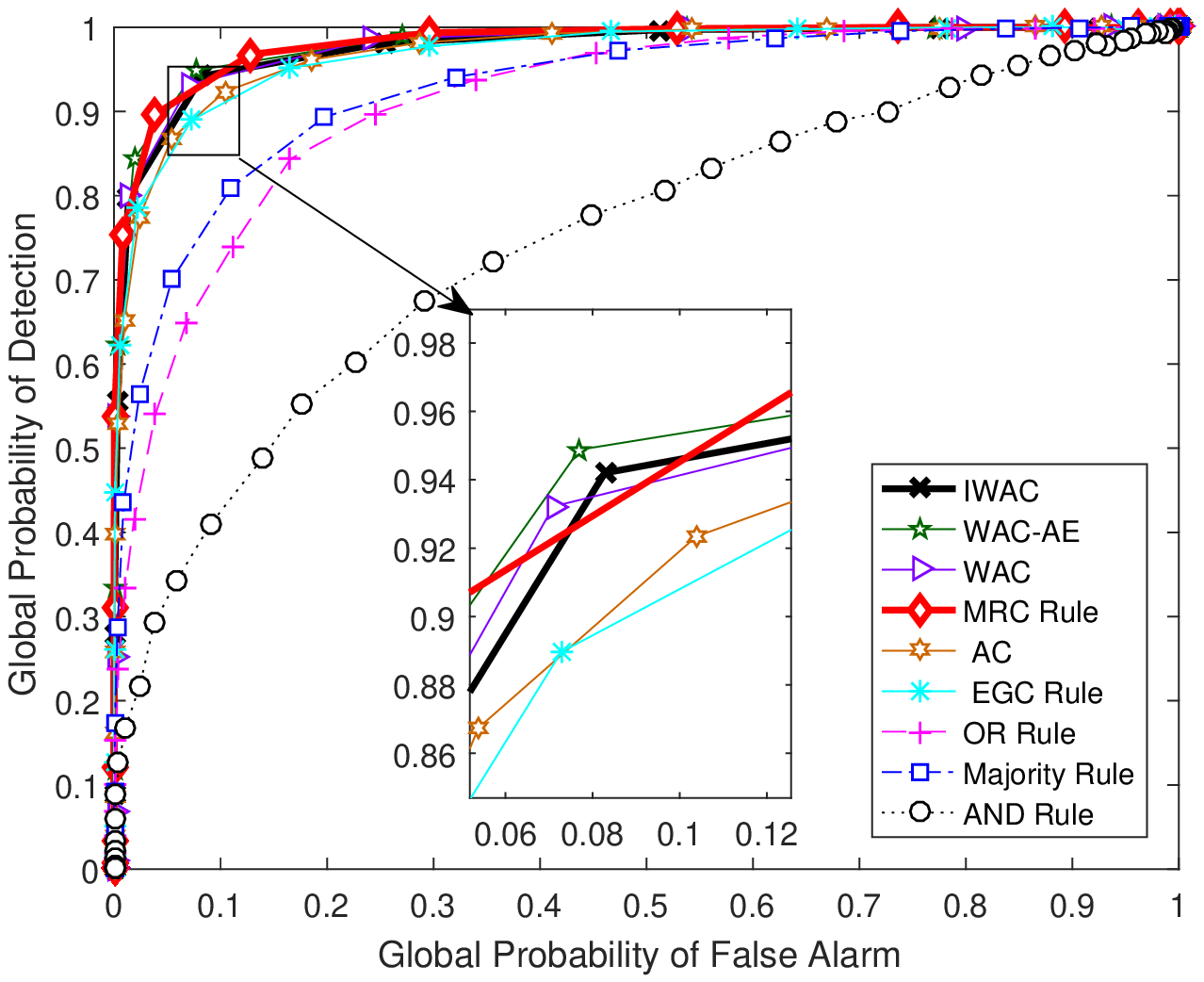}\\
	a) {\bf Fixed} Network, $6$ SUs \hspace{5cm}  b) {\bf Fixed} Network, $10$ SUs \\
 \includegraphics[width=.495\textwidth]{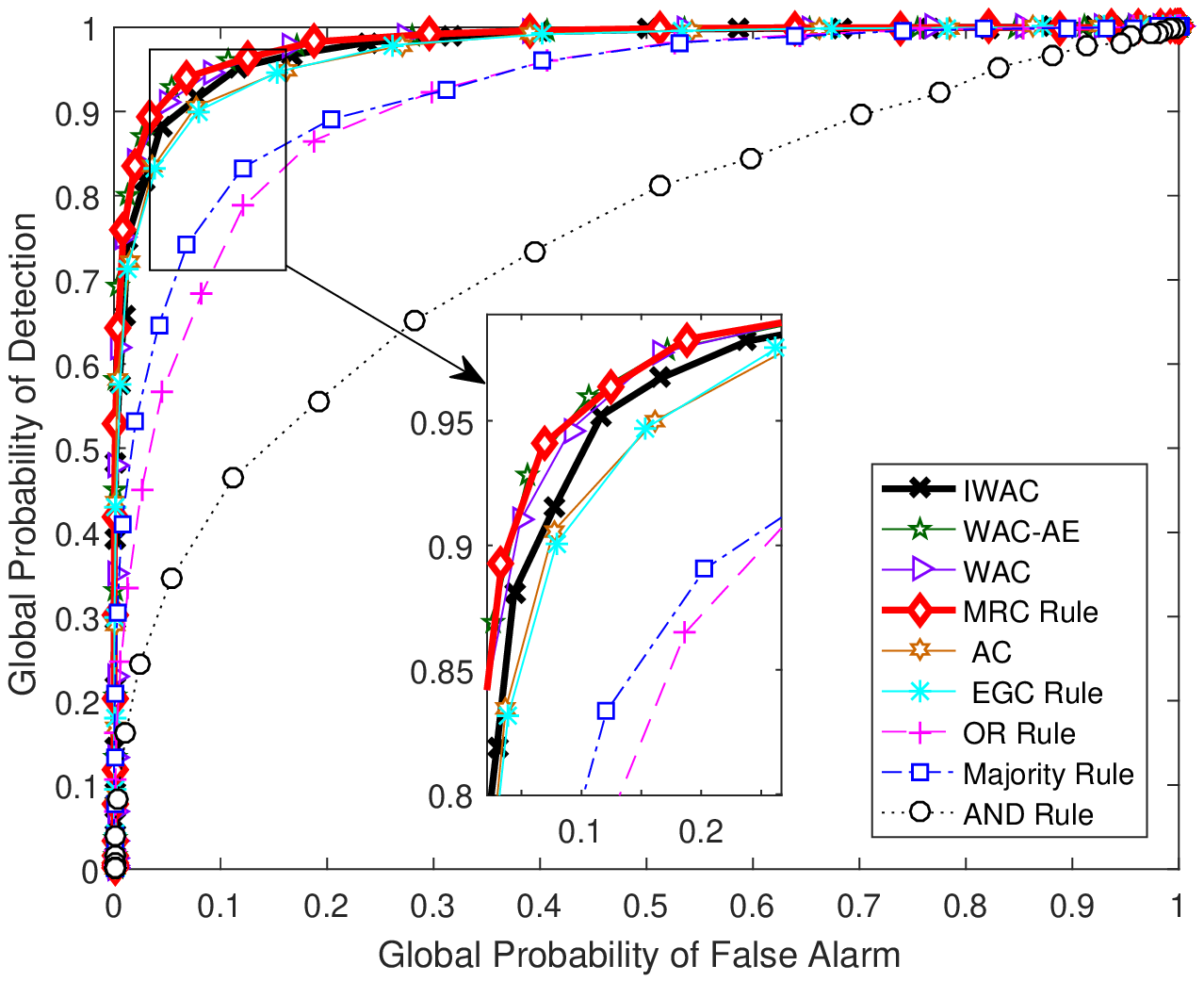}
 \includegraphics[width=.495\textwidth]{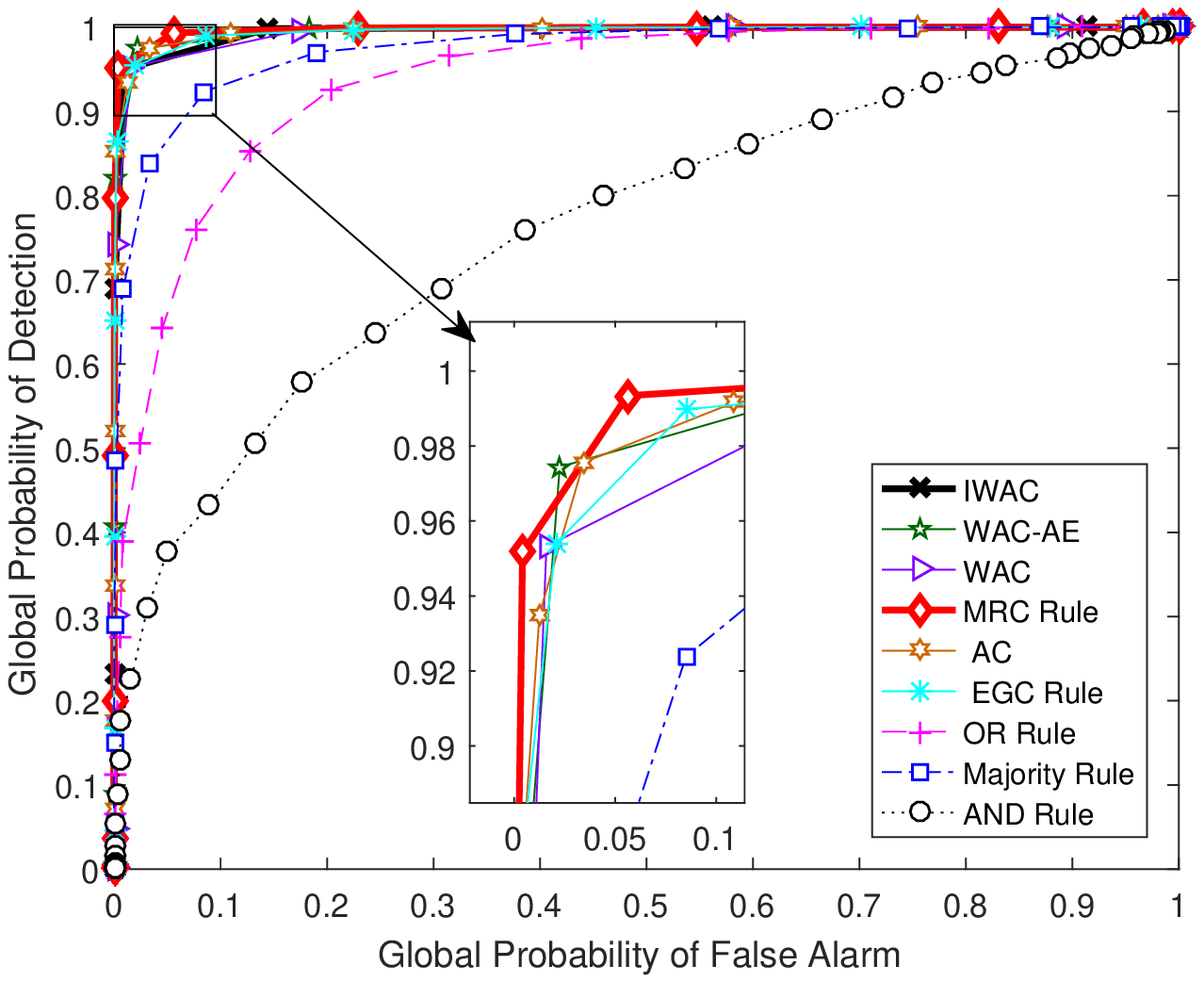}
 	c) {\bf Mobile} Network, $10$ SUs \hspace{5cm}  d) {\bf Mobile} Network, $20$ SUs \\
 \caption{Global ROC for several DCSS methods operating with  $6$, $10$ and $20$ SUs for Fixed and Dynamic Networks in {\bf AWGN channels}.}\label{fig:GlobalROC}
\end{figure}

\begin{figure}[!htbp]
\centering
\small
\includegraphics[width=.495\textwidth]{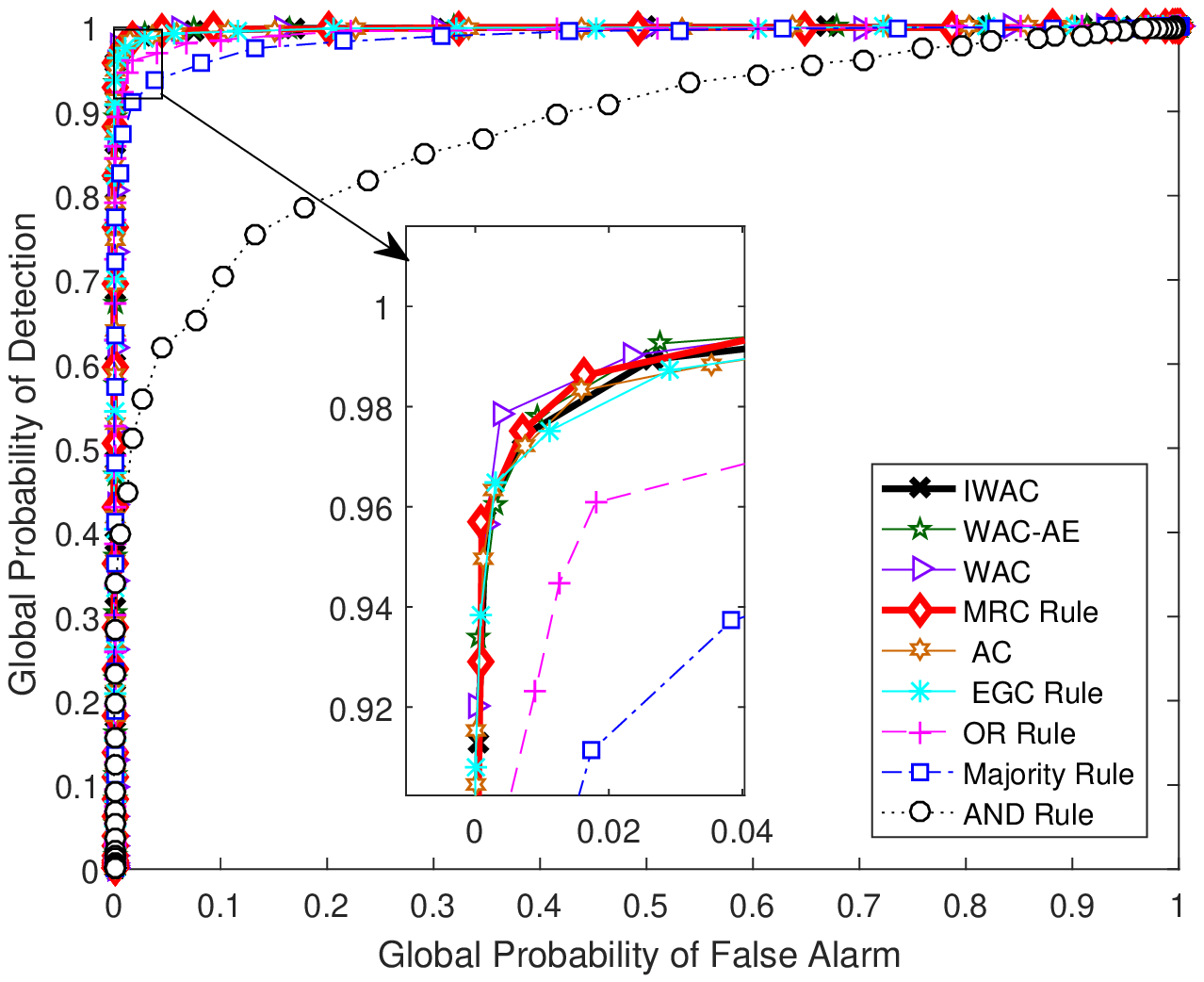}
\includegraphics[width=.495\textwidth]{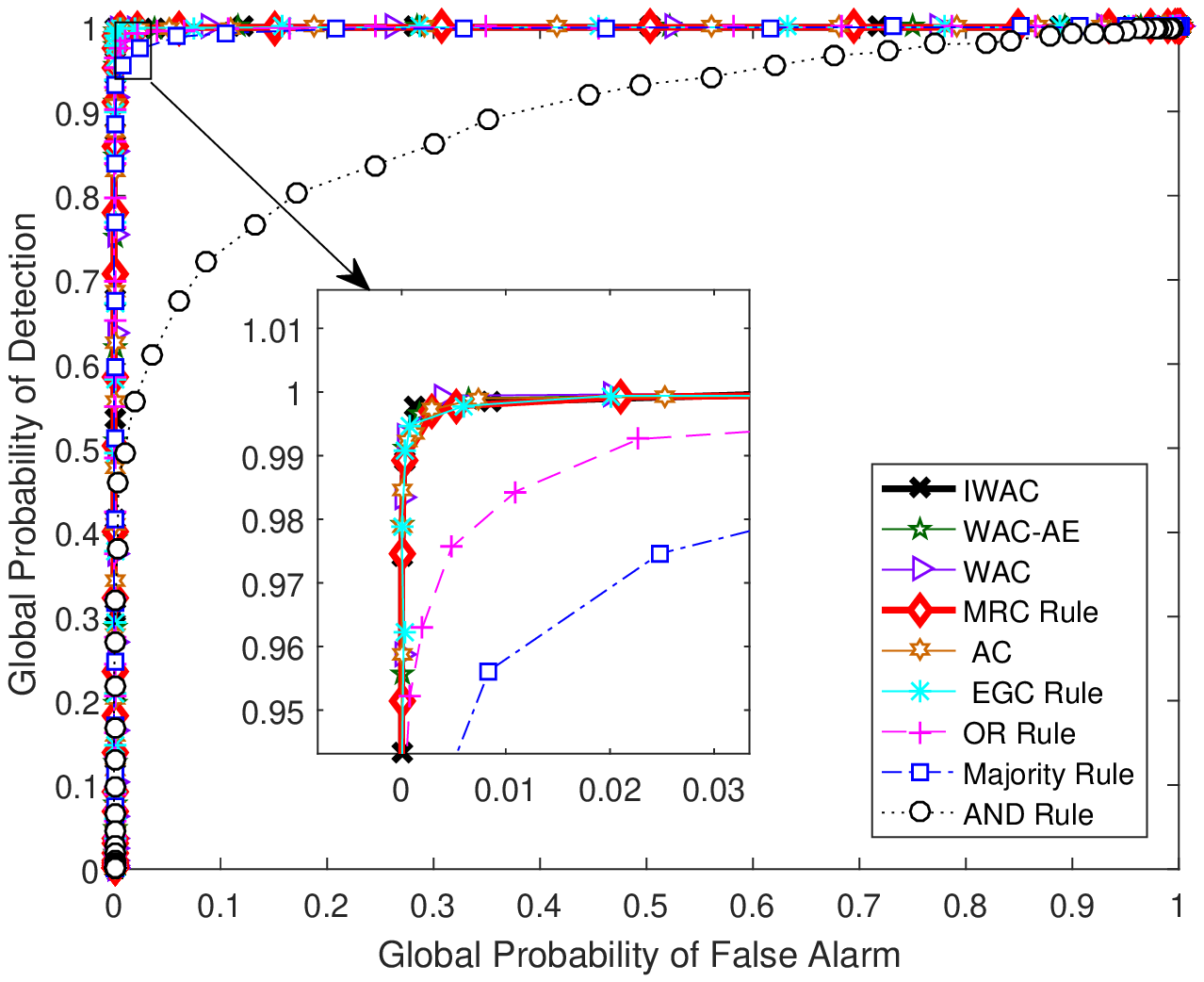}\\
	a) {\bf Fixed} Network, $6$ SUs \hspace{5cm}  b) {\bf Fixed} Network, $10$ SUs \\
 \includegraphics[width=.495\textwidth]{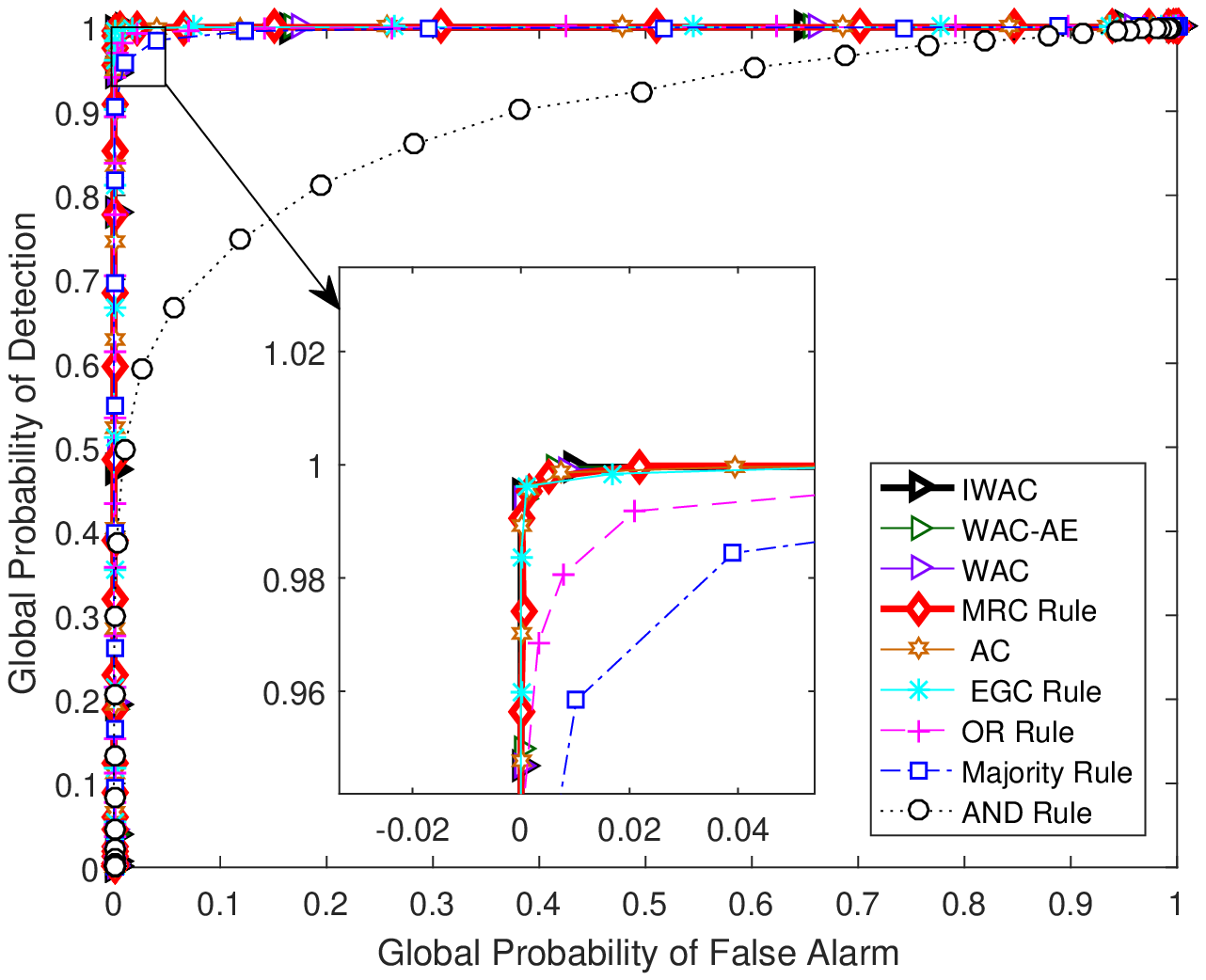}
 \includegraphics[width=.495\textwidth]{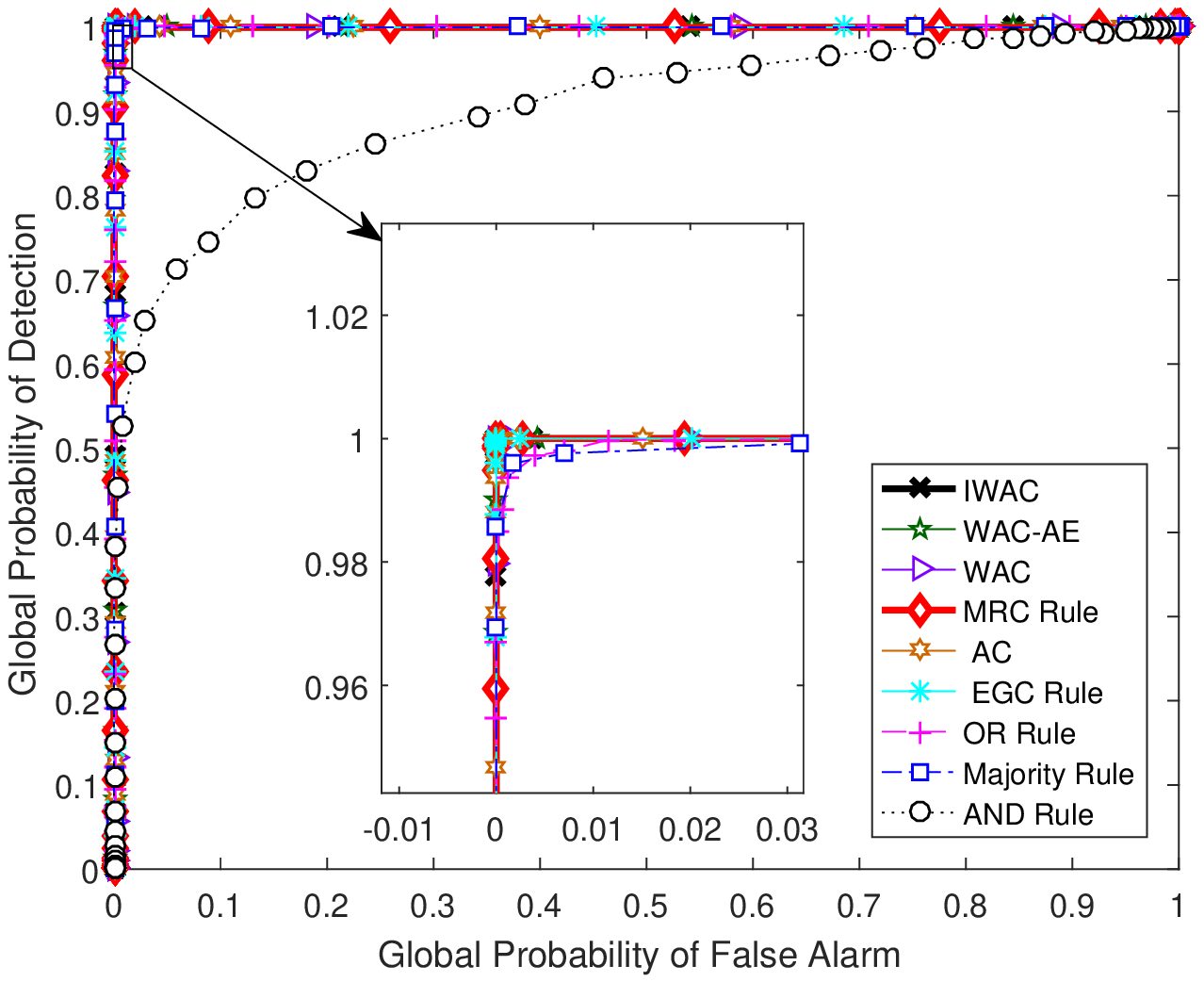}
 	c) {\bf Mobile} Network, $10$ SUs \hspace{5cm}  d) {\bf Mobile} Network, $20$ SUs \\
 \caption{Global ROC for several DCSS methods, $6$, $10$  and $20$ SUs for Fixed and Dynamic Networks operating under {\bf Rayleigh channels}.}\label{fig:GlobalROC2}
\end{figure}

\subsubsection{Analytical versus Simulated ROC}\label{sec:ROC}
Fig. \ref{fig:LocalROC} demonstrates the local (distributed) ROC  for the proposed IWAC-DCSS method considering only the Scenario A ($6$ and $10$ SUs in an AWGN fixed channel). The analytical expression for the ROC of each SU inspired in \eqref{eq:Pcentralized}, but considering the local decision, is  compared with the numerical Monte-Carlo simulation results. The analytical performance considering a fail probability at the  $i$-th SU,  ${\rm P}_{f}^{i}$ can be described adapting the eq. \eqref{eq:Pcentralized} to distributed IWAC soft CSS decision:
\begin{equation} \label{eq:IWAC_analyticalROC1}
	{\rm P}_{d}^{i} =  Q\left(\frac{Q^{-1}({\rm P}_{f}^{i})\sqrt{\text{var}(x_{i}|\mathcal{H}_{0})} - \mathbb{E}(x_{i}|\mathcal{H}_{1}) + \mathbb{E}(x_{i}|\mathcal{H}_{0})}{\sqrt{\text{var}(x_{i}|\mathcal{H}_{1})}}\right), 
\end{equation}
where 
\begin{equation}\label{eq:IWAC_analyticalROC2}
	\mathbb{E}(x_{i}|\mathcal{H}_{0,1}) =
  \begin{cases}
   (\prod_{\ell = 1}^{k} {\bf P}_{\ell_{\textsc{iwac}}} \mathbb{E}({\bf x}(0)|\mathcal{H}_{0}) )_{i}& \quad ,\mathcal{H}_{0}\\
     (\prod_{\ell = 1}^{k} {\bf P}_{\ell_{\textsc{iwac}}} \mathbb{E}({\bf x}(0)|\mathcal{H}_{1}) )_{i} &\quad ,\mathcal{H}_{1}\\
  \end{cases}
\end{equation}

\begin{equation}\label{eq:IWAC_analyticalROC3}
 \text{var}(x_{i}|\mathcal{H}_{0,1}) =
  \begin{cases}
   (\prod_{\ell = 1}^{k} {\bf P}_{\ell_{\textsc{iwac}}} \text{cov}({\bf x}(0)|\mathcal{H}_{0})  \prod_{\ell = 1}^{k} {\bf P}_{\ell_{\textsc{iwac}}})_{ii} & \quad ,\mathcal{H}_{0}\\
     (\prod_{\ell = 1}^{k} {\bf P}_{\ell_{\textsc{iwac}}} \text{cov}({\bf x}(0)|\mathcal{H}_{1}) \prod_{\ell = 1}^{k} {\bf P}_{\ell_{\textsc{iwac}}})_{ii} &\quad ,\mathcal{H}_{1}\\
  \end{cases}
\end{equation}
where $\text{cov}({\bf x}) = \mathbb{E}[({\bf x} - \mathbb{E}({\bf x}))({\bf x} - \mathbb{E}({\bf x}))^{T}]$ is the covariance matrix of the vector ${\bf x}$.

Indeed, for scenario A, Fig. \ref{fig:LocalROC} demonstrates suitable fitting among  the Monte-Carlo simulated results and the analytical expression, evidencing that the set of eqs \eqref{eq:IWAC_analyticalROC1}-\eqref{eq:IWAC_analyticalROC3} is a valid analytical description to characterize the IWAC ROC performance.

\begin{figure}[!htbp]
\centering
\small
\includegraphics[width=.495\textwidth]{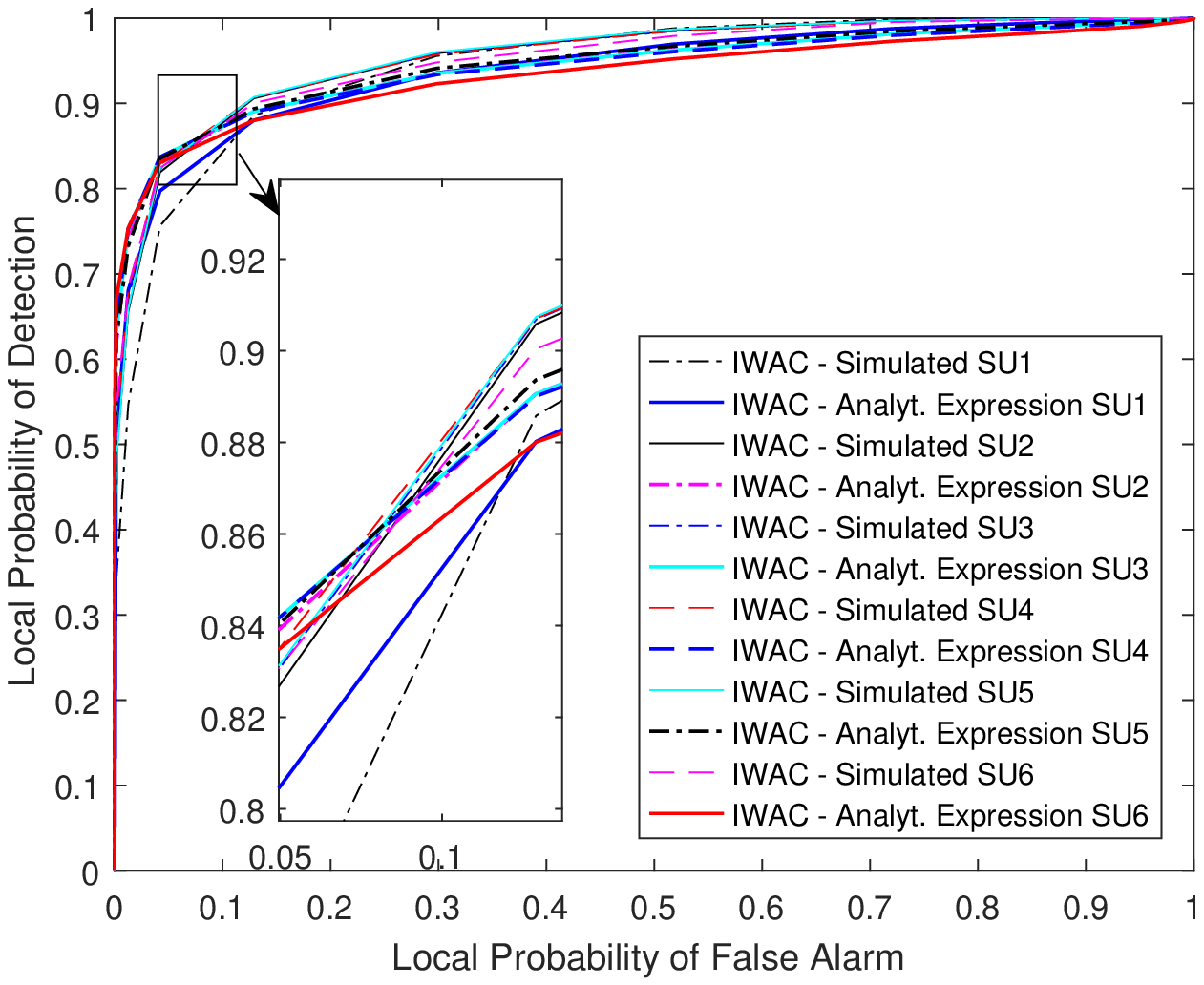}
\includegraphics[width=.495\textwidth]{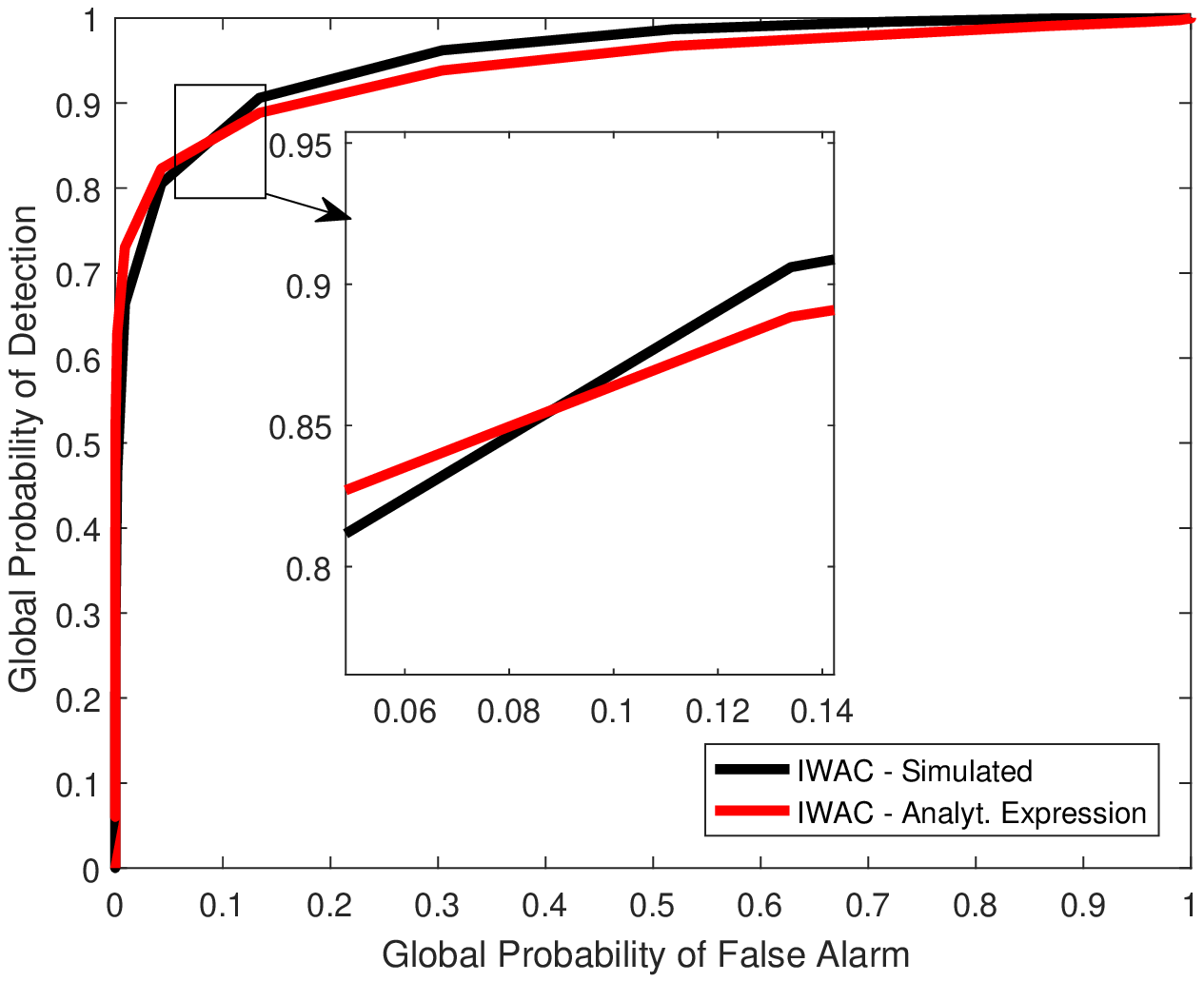}\\
	a) {\bf Fixed} Network, $6$ SUs \hspace{5cm}  b) {\bf Fixed} Network, $6$ SUs \\
 \includegraphics[width=.495\textwidth]{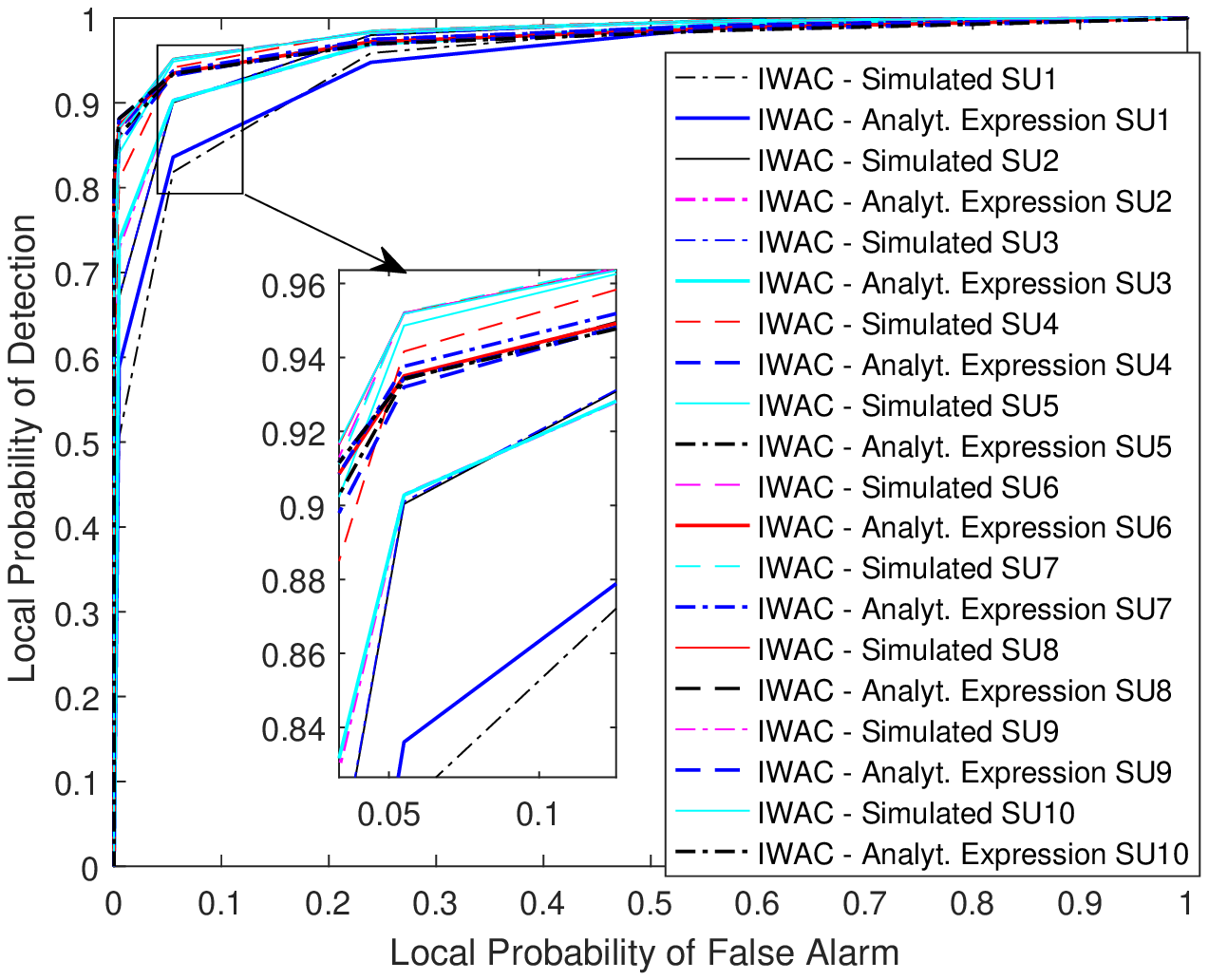}
 \includegraphics[width=.495\textwidth]{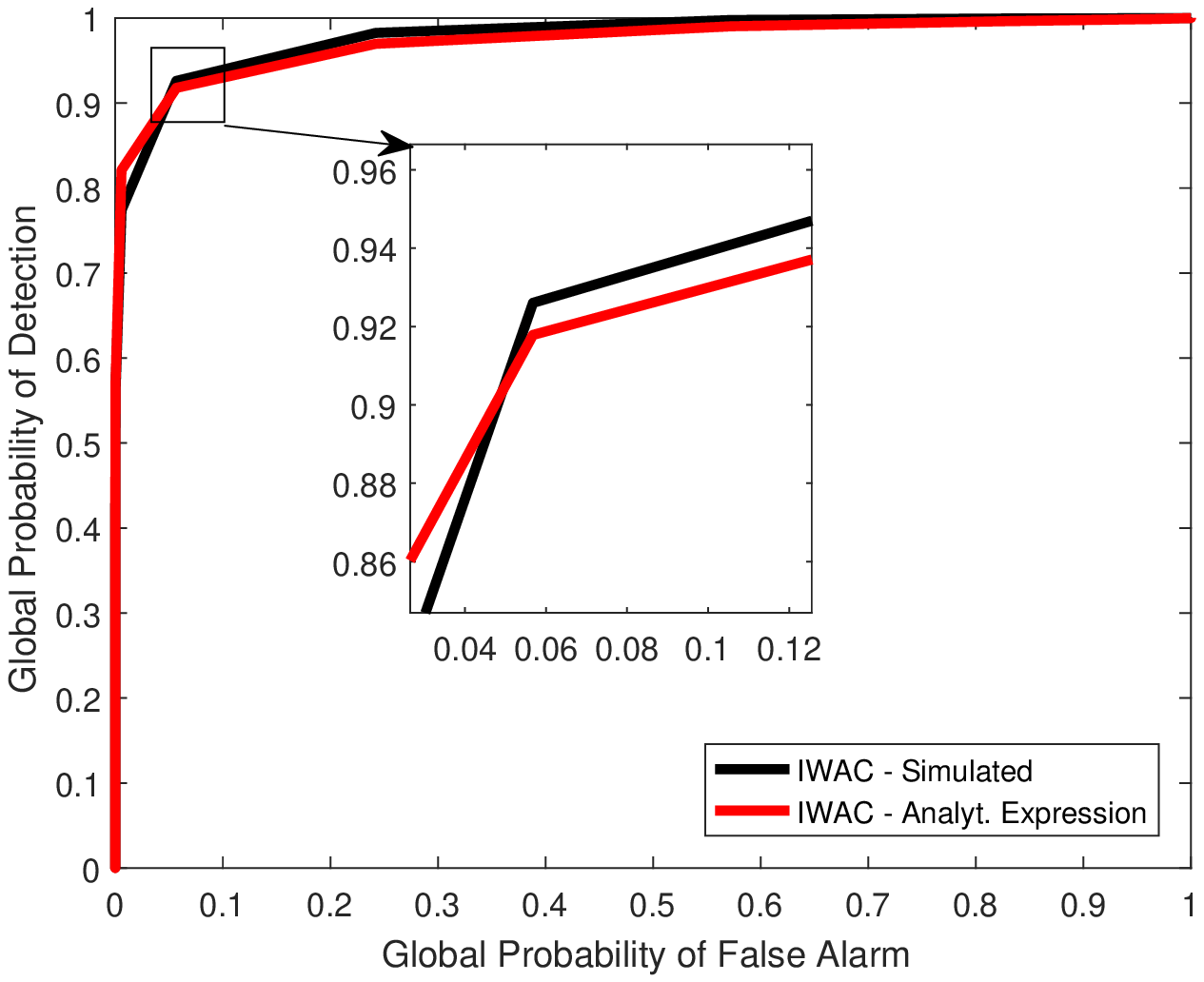}
 	c) {\bf Fixed} Network, $10$ SUs \hspace{5cm}  d) {\bf Fixed} Network, $10$ SUs \\
 \caption{Local and global ROC for $6$ and $10$ SUs under AWGN channel Scenario A.}\label{fig:LocalROC}
\end{figure}

\subsection{Computational Complexity and Average Convergence Time for Distributed AC Techniques }\label{Sec:complexity}
The {\it average convergence time} for the AC methods was established in \cite{Benezit10} considering a large number of nodes $n$ (or number of SUs) in the network as:
\begin{equation}
	\nonumber
	\mathcal{T}_{\textsc{ac}}(n) = \mathcal{O}\left(\frac{{\rm log} (n)}{1 - \rho_{2}(\mathbb{E}[{\bf P}^{T} {\bf P}])}\right), \qquad (\text{large} \,\,  n)
\end{equation} 
where $\rho_{2}$ is the second largest eigenvalue module (SLEM), the associated Perron matrix is $\bf P$ and $n$ is the number of nodes in the network (number of SUs). When $\rho_{2}(\mathbb{E}[{\bf P}^{T} {\bf P}]) \rightarrow 1$ implies that the number of secondary users in the network tends to infinity, i.e, $n \rightarrow \infty$. In this way, the {\it average convergence time} allows us to verify the dependence of the number of iterations for convergence regarding the size of the network and the AC rule chosen. In other words, the higher the value of $\rho_{2}(\mathbb{E}({\bf P}^{T} {\bf P}))$ more time is required to the consensus rule achieves convergence.

The AC complexity analysis based on SLEM values associated to the Perron matrices for each average consensus rule analysed in this work 
confirms the tendency found in our numerical results of section \ref{sec:converg}, corroborating our finding that the AC rule achieves reduced convergence time among the analysed rules, followed by our proposed WAC-AE  and IWAC rules, and finally by the WAC rule. In fact, in our paper we consider a low number of nodes in the network. Hence,  a more appropriate expression correlating the SLEM ($\rho_{2}$) and average convergence time os \cite{Boyd06}:

\begin{equation}
	\nonumber
	\widetilde{\mathcal{T}}_{\textsc{ac}} = \frac{1}{{\rm ln}\left(\frac{1}{\rho_{2}(\mathbb{E}[{\bf P}])}\right)} \qquad (\text{small or medium} \,\,  n)
\end{equation}
where $\rm ln (\cdot)$ is the natural logarithm.

The asymptotic expressions for the computational complexity of the analysed AC rules have been determined from the AC pseudo-codes (section \ref{CDSS}) and  depicted in Table \ref{tab:ACC}. As expected, the AC has the lower computational complexity among all AC distributed SS methods. The methods WAC, WAC-AE and IWAC distributed consensus methods present the same computational complexity order, resulting in a quadratic dependence with the number of SUs $N$ and a linear dependence with the number of iterations $K$.

\begin{table}[!htbp]
  \centering
  \caption{Computational Complexity for Distributed AC Algorithms}
  \label{tab:ACC}
  \begin{tabular}{cccc}
    \toprule
	 \bf AC rule &  \bf Consensus & \bf Asymptotic \\
 	\bf Algorithm  &  \bf  Method & \bf Complexity\\
    \midrule  
  	\ref{algo:AC} &  {\bf AC} & $\mathcal{O}(KN)$ \\
  	\ref{algo:WAC} &  {\bf WAC} & $\mathcal{O}(KN^2)$ \\
 	\ref{algo:WAC-AE}&   {\bf WAC-AE} & $\mathcal{O}(KN^2)$ \\
  	\ref{algo:IWAC}&  {\bf IWAC} & $\mathcal{O}(KN^2)$ \\
    \bottomrule
  \end{tabular}
\end{table}

\section{Conclusions}\label{sec:Concl}
In this paper we have proposed and analysed two new decentralised average consensus-based spectrum sensing scheme, namely IWAC and WAC-AE, and compare their performance {and complexity} with two other conventional CSS decentralised consensus-based methods (AC and WAC), as well as with other traditional centralised CSS under hard and soft combining rules. The performance comparison is made regarding the  receiver operator characteristics (ROC) and numerical {\it versus} analytical convergence. The proposed IWAC method results in similar convergence rate to the WAC-AE method.

{Regarding} ROC analysis, the WAC and WAC-AE methods demonstrate similar performance, which is {comparable} to the centralised MRC rule. Moreover, the AC method and EGC has also similar performance, which results worse than the MRC performance. Indeed, the proposed decentralised IWAC method has demonstrated ROC performance in between the centralised MRC and EGC rules.

The weighted decentralised CSS methods discussed herein result in a similar computational complexity cost, being asymptotically equal to the product of the squared number of cooperative SUs and the number of iterations, $N^2K$. Another way to evaluate the complexity {of} the AC rules is the  {\it average convergence time} based on the second largest eigenvalue module (SLEM) which is dependent on the associated Perron matrix $\bf P$ and the number of SUs $n$. The AC complexity analysis based on SLEM has confirmed the tendency found in our numerical simulation results, corroborating our conclusion that the AC rule achieves reduced convergence time among the analysed rules, followed by our proposed WAC-AE  and IWAC rules, and finally by the WAC rule. {In summary, the IWAC method results in a similar convergence rate than the WAC-AE method but slightly higher than the AC and WAC methods.}

\section*{Acknowledgement}
This work was supported in part by the National Council for Scientific and Technological Development (CNPq) of Brazil under Grants 304066/2015-0 and  308348/2016-8, and in part by CAPES -- Coordenação de Aperfeiçoamento de Pessoal de Nível Superior, Brazil (scholarship),  and by the Londrina State University - Paraná State Government (UEL).


\end{document}